\newcommand{\citep}{\cite}
\newcommand{\Vs}{\mathcal{V}}
\newcommand{\Gs}{\mathcal{G}}
\newcommand{\tGs}{\widetilde{\mathcal{G}}}
\newcommand{\tGst}{{\widetilde{\mathcal{G}}}^{(l)}}
\newcommand{\Es}{\mathcal{E}}
\newcommand{\tVs}{\widetilde{\mathcal{V}}}
\newcommand{\tEs}{\widetilde{\mathcal{E}}}
\newcommand{\st}{s\text{-}t}
\newcommand{\mincut}{min\text{-}cut\xspace}
\newcommand{\xtm}{\x^{(l-1)}}
\newcommand{\xt}{\x^{(l)}}
\newcommand{\xT}{\x^{(T)}}
\newcommand{\zt}{\z^{(l)}}
\newcommand{\wt}{\w^{(l)}}
\newcommand{\Wt}{\W^{(l)}}
\newcommand{\Lt}{\L^{(l)}}
\newcommand{\tLt}{\widetilde{\L}^{(l)}}
\newcommand{\tL}{\widetilde{\L}}
\newcommand{\tMt}{\widetilde{\M}^{(l)}}
\newcommand{\tLtp}{\widetilde{\L}^{(l+1)}}
\newcommand{\bt}{\b^{(l)}}
\newcommand{\btp}{\b^{(l+1)}}
\renewcommand{\vt}{\v^{(l)}}
\newcommand{\bd}{{\bf d}}
\newcommand{\Ss}{\mathcal{S}}
\newcommand{\Sscomp}{\bar{\mathcal{S}}}
\newcommand{\Ts}{\mathcal{T}}
\newcommand{\Seps}{{S_\epsilon}}
\newcommand{\vol}{\text{vol}}
\newcommand{\cut}{\text{cut}}
\newcommand{\tvol}{\widetilde{\text{vol}}}
\newcommand{\algname}{PIRMCut\xspace} 
\newcommand{\bksolver}{{B-K Solver}\xspace} 
\newcommand{\Tr}{^{\mathrm{T}}}
\renewcommand{\b}{{\bf b}}
\newcommand{\e}{{\bf e}}
\newcommand{\f}{{\bf f}}
\newcommand{\g}{{\bf g}}
\renewcommand{\r}{{\bf r}}
\renewcommand{\v}{{\bf v}}
\newcommand{\w}{{\bf w}}
\newcommand{\x}{{\bf x}}
\newcommand{\y}{{\bf y}}
\newcommand{\z}{{\bf z}}
\newcommand{\B}{{\bf B}}
\newcommand{\C}{{\bf C}}
\newcommand{\D}{{\bf D}}
\newcommand{\I}{{\bf I}}
\renewcommand{\L}{{\bf L}}
\newcommand{\M}{{\bf M}}
\renewcommand{\O}{{\bf O}}
\renewcommand{\P}{{\bf P}}
\newcommand{\W}{{\bf W}}
\newcommand{\Z}{{\bf Z}}
\newcommand{\blambda}{\boldsymbol{\lambda}}
\newcommand{\ben}{\begin{enumerate}}
\newcommand{\een}{\end{enumerate}}
\newtheorem{theorem}{Theorem}[section]
\newtheorem{proposition}[theorem]{Proposition}
\newenvironment{proof}[1][Proof]{\begin{trivlist}
\item[\hskip \labelsep {\bfseries #1}]}{\end{trivlist}}
\newcommand{\qed}{\nobreak \ifvmode \relax \else
      \ifdim\lastskip<1.5em \hskip-\lastskip
      \hskip1.5em plus0em minus0.5em \fi \nobreak
      \vrule height0.75em width0.5em depth0.25em\fi}
\begin{document}

\title{\Large A Parallel Min-Cut Algorithm using Iteratively Reweighted Least Squares\thanks{Supported by NSF CAREER award CCF-1149756 and ARO Award 1111691-CNS}}

\author{
  Yao Zhu\\
  \texttt{yaozhu@purdue.edu}\\
  Purdue University
  \and
  David F.~Gleich\\
  \texttt{dgleich@purdue.edu}\\
  Purdue University
}

\date{}

\maketitle


\begin{abstract} \small
We present a parallel algorithm for the undirected $\st$ \mincut problem with floating-point valued weights. Our overarching algorithm uses an iteratively reweighted least squares framework. This generates a sequence of Laplacian linear systems, which we solve using parallel matrix algorithms. Our overall implementation is up to 30-times faster than a serial solver when using 128 cores.
\end{abstract}

\section{Introduction}
We consider the undirected $\st$ \mincut problem. Our goal is a practical, scalable, parallel algorithm for problems with hundreds of millions or billions of edges and with floating point weights on the edges. Additionally, we expect there to be a sequence of such $\st$ \mincut computations where the difference between successive problems is small.
The motivation for such problems arises from a few recent applications including the FlowImprove method to improve a graph partition \citep{AndersenL:2008} and the GraphCut method to segment high-resolution images and MRI scans \citep{BoykovF:2006,HernandoKHZ:2010}. Both of these applications are limited by the speed of current $\st$ \mincut solvers that can handle problems with floating point weights. 
We seek to accelerate such $\st$ \mincut computations.

For the undirected $\st$ \mincut problem, we present a Parallel Iteratively Reweighted least squares Min-Cut solver, which we call \algname\ for convenience. This algorithm draws its inspiration from the recent theoretical work on using Laplacians and electrical flows to solve max-flow/\mincut in undirected graphs \citep{ChristianoKMST:2011,KelnerLOS:2014,LeeRS:2013}. However, our exposition and derivation is entirely self-contained. There are three essential ingredients to our approach. The first essential ingredient is a variational representation of the $\ell_1$-minimization formulation of the undirected $\st$ \mincut (Section \ref{subsection:irls-algorithm}). This representation allows us to use the iteratively reweighted least squares (IRLS) method to generate a sequence of symmetric diagonally dominant linear systems whose solutions converge to an approximate solution (Theorem \ref{theorem:sequence-converge}).
We show that these systems are equivalent to electrical flows computation (Proposition \ref{proposition:wls-equal-electrical-flow}).
We also prove a Cheeger-type inequality that relates an undirected $\st$ \mincut to a generalized eigenvalue problem (Theorem \ref{theorem:cheeger-type-inequality}).
The second essential ingredient is a parallel implementation of the IRLS algorithm using a parallel linear system solver. 
The third essential ingredient is a two-level rounding procedure that uses information from the electrical flow solution to generate a smaller $\st$ \mincut problem suitable for sequential $\st$ \mincut solvers.

We have implemented \algname on a distributed memory platform and evaluated its performance on a set of test problems from road networks and MRI scans. 
Our solver, \algname, is 30 times faster (using 128 cores) than a state-of-the-art sequential $\st$ \mincut solver on a large road network.
In the experimental results, we also demonstrate the benefit of using warm starts when solving a sequence of related linear systems. We further show the advantage of the proposed two-level rounding procedure over the standard sweep cut in producing better approximate solutions.

At the moment, we do not have a precise and graph-size based runtime bound on \algname. We also acknowledge that, like most numerical solvers, it is only up to $\delta$-accurate. The focus of this initial manuscript is investigating and documenting a set of techniques that are principled and could lead to practically fast solutions on real world $\st$ \mincut problems. We compare our approach with those of others and discuss some further opportunities of our approach in the related work and discussions (Sections~\ref{section:related},~\ref{section:discussions}).
\section{An IRLS algorithm for undirected $\st$ \mincut}
In this section, we describe the derivation of the IRLS algorithm for the undirected $\st$ \mincut problem. We first introduce our notations. Let $\Gs=(\Vs,\Es)$ be a weighted, undirected graph. Let $n=|\Vs|$, and $m=|\Es|$. We require for each undirected edge $\{u,v\}\in \Es$, its weight $c(\{u,v\})>0$. Let $s$ and $t$ be two distinguished nodes in $\Gs$, and we call $s$ the \textit{source node} and $t$ the \textit{sink node}. The problem of undirected $\st$ \mincut is to find a partition of $\Vs=\Ss\cup \Sscomp$ with $s\in\Ss$ and $t\in \Sscomp$ such that the cut value
\begin{align*}
\text{cut}(\Ss,\Sscomp) &= \sum_{\{u,v\}\in\Es,u\in\Ss,v\in\Sscomp}c(\{u,v\})
\end{align*}
is minimized. For the interest of solving the undirected $\st$ \mincut problem, we assume $\Gs$ to be connected. We call the subgraph of $\Gs$ induced on $\Vs\backslash\{s,t\}$ the \textit{non-terminal graph}, and denote it by $\tGs=(\tVs,\tEs)$. We call the edges incident to $s$ or $t$ the \textit{terminal edges}, and denote them by $\Es^{\Ts}$.

The undirected $\st$ \mincut problem can be formulated as an $\ell_1$-minimization problem. Let $\B\in\{-1,0,1\}^{m\times n}$ be the edge-node incidence matrix corresponding to an arbitrary orientation of $\Gs$'s edges, and $\C$ be the $m\times m$ diagonal matrix with $c(\{u,v\})$ on the main diagonal. Further let $\f=[1\ 0]^{\Tr}$, and $\Phi^{\Tr}=[\e_s\ \e_t]$ where $\e_s$ ($\e_t$) is the $s$\text{-}th ($t$\text{-}th) standard basis, then  the undirected $\st$ \mincut problem is
\begin{equation} \label{eq:l1-min}
 \MINone{\x}{ \normof[1]{\C \B \x} }{\Phi \x = \f, \quad \x \in [0,1]^n. }
\end{equation}


\subsection{The IRLS algorithm}
\label{subsection:irls-algorithm}
The IRLS algorithm for solving the $\ell_1$-minimization problem (\ref{eq:l1-min}) is motivated by the variational representation of the $\ell_1$ norm
\begin{align*}
\|\z\|_1 &= \textstyle \min_{\w\geq 0}\frac{1}{2}\sum_{i}\left(\frac{\z_i^2}{\w_i}+\w_i\right)
\end{align*}
Using this variational representation, we can rewrite the $\ell_1$-minimization problem (\ref{eq:l1-min}) as the following joint minimization problem
\begin{eqnarray}\label{eq:joint-min}
 & \MINof[\x,\w \ge 0]  & H(\x,\w)=\frac{1}{2}\sum_{i}\left(\frac{(\C\B\x)_i^2}{\w_i}+\w_i\right) \\
 & \subjectto  & \Phi \x = \f, \quad \x \in [0,1]^n \label{eq:boundary-constraint}
\end{eqnarray}
The IRLS algorithm for (\ref{eq:l1-min}) can be derived by applying alternating minimization to (\ref{eq:joint-min}) \citep{Beck:2013}. Given an IRLS iterate $\xtm$ satisfying constraint (\ref{eq:boundary-constraint}), the next IRLS iterate $\xt$ is defined according to the following two alternating steps:
\begin{itemize}
\item Step 1. Compute the \textit{reweighting vector} $\wt$, of which each component is defined by
\begin{align}
\label{eq:define-reweight-vector}
\wt_i=\sqrt{(\C\B\xtm)_i^2+\epsilon^2}
\end{align}
where $\epsilon>0$ is a smoothing parameter that guards against the case of $(\C\B\xtm)_i=0$.
\item Step 2. Let $\Wt=\diag(\wt)$, update $\xt$ by solving the weighted least squares (WLS) problem
\begin{equation} \label{eq:constrained-wls}
 \MINof[\x] \quad \frac{1}{2}\x^{\Tr}\B^{\Tr}\C(\Wt)^{-1}\C\B\x \quad \subjectto \quad \Phi\x = \f.
\end{equation}
\end{itemize}

We prove that the sequence of iterates from the IRLS algorithm is well-defined, that is, each iterate $\xt$ exists and satisfies $\xt\in [0,1]^n$. Solving (\ref{eq:constrained-wls}) leads to solving the saddle point problem
\begin{align}
\label{eq:saddle-point-system}
\left[\begin{array}{ll}
\Lt & \Phi^{\Tr}\\
\Phi & \O
\end{array}
\right]
\left[\begin{array}{l}
\x\\
\blambda
\end{array}
\right]&=
\left[\begin{array}{l}
0\\
\f
\end{array}\right]
\end{align}
Regarding the nonsingularity of the linear system (\ref{eq:saddle-point-system}), we have the following proposition.
\begin{proposition}
A sufficient and necessary condition for the linear system (\ref{eq:saddle-point-system}) to be nonsingular is that $\Gs$ is connected.
\label{prop:nonsingular}
\end{proposition}
\begin{proof}
When $\Gs$ is connected, $\ker(\Lt)=\mbox{span}\{\e\}$, where $\e$ is the all-one vector. From the definition of $\Phi$, we know $\e\not\in\ker(\Phi)$, i.e., $\ker(\Lt)\cap\ker(\Phi)=\{0\}$. With $\Phi$ being full rank, we apply Theorem 3.2 from \citep{Benzi:2005} to prove the proposition.
\qed
\end{proof}
We now prove that the solution $\xt$ to (\ref{eq:saddle-point-system}) automatically satisfies the interval constraint.
\begin{proposition}
\label{prop:interval-constraint-satisfaction}
If $\Gs$ is connected, then for each IRLS update we have $\xt\in [0,1]^n$.
\end{proposition}
\begin{proof}
According to proposition \ref{prop:nonsingular}, the linear system (\ref{eq:saddle-point-system}) is nonsingular when $\Gs$ is connected. We apply the null space method \citep{Benzi:2005} to solve it. Let $\Z$ be the matrix from deleting the $s,t$ columns of the identity matrix $\I_n$, then $\Z$ is a null basis of $\Phi$. Because $\Phi\e_s=\f$, using $\Z$ and $\e_s$ we reduce (\ref{eq:saddle-point-system}) to the following linear system
\begin{align}
(\Z^{\Tr}\Lt\Z)\v &= -\Z^{\Tr}\Lt\e_s
\label{eq:reduced-laplacian-system}
\end{align}
where $\v$ is the vector by deleting the $s$ and $t$ components from $\x$. Note $\bt=-\Z^{\Tr}\Lt\e_s\geq 0$ because $\Lt$ is a Laplacian. We call $\tLt=\Z^{\Tr}\Lt\Z$ the \textit{reduced Laplacian}, and $(\tLt)^{-1}\geq 0$ because it's a Stieltjes matrix (see Corollary 3.24 of \citep{Varga:2000}). Thus $\vt=(\tLt)^{-1}\bt\geq 0$. Also note that $\tLt\e\ge \bt$, thus $\tLt(\e-\vt)=\tLt\e-\bt\ge 0$. Using $(\tLt)^{-1}\ge 0$ again, we have $\e-\vt\ge 0$. In summary, we have $0\le \vt\le \e$.
\qed
\end{proof}

Now define
\begin{align}
\label{eq:reweighted-laplacian}
\Lt&=\B^{\Tr}\C(\Wt)^{-1}\C\B
\end{align}
we note that $\Lt$ is a \textit{reweighted Laplacian} of $\Gs$, where the edge weights are reweighted by the diagonal matrix $(\Wt)^{-1}$. We also define the \emph{reduced Laplacian} to be $\tLt=\Z^{\Tr}\Lt\Z$, in which $\Z$ is the matrix from deleting the $s,t$ columns of the identity matrix $\I_n$.

In applying the IRLS algorithm to solve the undirected $\st$ \mincut problem, we start with an initial vector $\x^{(0)}$ satisfying (\ref{eq:boundary-constraint}). We take $\x^{(0)}$ to be the solution to (\ref{eq:constrained-wls}) with $\W^{(0)}=\C$. Then we generate the IRLS iterates $\xt$ for $l=1,\ldots,T$ using (\ref{eq:define-reweight-vector}) and (\ref{eq:constrained-wls}) alternatingly. 
Finally we apply a rounding procedure on $\xT$ to get a binary cut indicator vector. We discuss the details of the rounding procedure in Section \ref{subsection:two-level-rounding-procedure}. Essentially what the IRLS algorithm does is solving a sequence of reduced Laplacian systems for $l=1,\ldots,T$. It is already known that undirected max-flow/\mincut can be approximated by solving a sequence of electrical flow problems \citep{ChristianoKMST:2011}. We make the connection between the IRLS and the electrical flow explicit by proposition~\ref{proposition:wls-equal-electrical-flow}. Because of this connection between IRLS and the electrical flow, we also call $\xt$ the \textit{voltage vector} in the following.
\begin{proposition}
The WLS (\ref{eq:constrained-wls}) solves an electrical flow problem. Its flow value is $(\xt)^{\Tr}\Lt\xt$.
\label{proposition:wls-equal-electrical-flow}
\end{proposition}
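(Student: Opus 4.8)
The plan is to exhibit an explicit electrical flow problem whose optimality conditions coincide with the WLS problem \eqref{eq:constrained-wls}, and then compute the resulting flow value. Recall that an electrical flow on $\Gs$ with boundary conditions supplies a prescribed net current at the terminals and minimizes the total energy dissipation $\sum_{e} r_e \cdot (\text{current on } e)^2$ subject to flow conservation at all non-terminal nodes. Here I would take the edge resistances to be $r_e = \wt_e / c(e)^2$, i.e., the reciprocals of the reweighted edge conductances appearing on the diagonal of $\C(\Wt)^{-1}\C$, and a unit current injected at $s$ and extracted at $t$. The first step is therefore to write down this energy-minimization problem over flows $\f_{\text{el}}$ and observe, via the standard electrical-network correspondence, that its minimizer is induced by a potential (voltage) vector $\bphi$ satisfying $\Lt \bphi = \I_s - \I_t$ on the non-terminal coordinates, where $\Lt = \B^{\Tr}\C(\Wt)^{-1}\C\B$ is exactly the reweighted Laplacian defined in \eqref{eq:reweighted-laplacian}.

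The second step is to match this to the saddle-point system \eqref{eq:saddle-point-system}. The Lagrangian optimality conditions for \eqref{eq:constrained-wls} give $\Lt \x + \Phi^{\Tr}\blambda = 0$ together with $\Phi\x = \f$; i.e., $\Lt \x = -\Phi^{\Tr}\blambda$, which says that $\Lt\x$ is supported only on the $s$ and $t$ coordinates and, since $\Lt\x$ has entries summing to zero (as $\Lt\e = 0$), it must equal $\alpha(\e_s - \e_t)$ for some scalar $\alpha$. Fixing $\x_s = 1, \x_t = 0$ via the constraint $\Phi\x=\f$ determines $\x$ up to this scaling, and identifies $\alpha$ as the net current that flows from $s$ to $t$ at these voltages. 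Thus $\x$ is (a scaled) electrical potential for the network with resistances $r_e = \wt_e/c(e)^2$, establishing that \eqref{eq:constrained-wls} solves an electrical flow problem; I would be slightly careful here about whether the flow is normalized to unit current or to unit voltage drop, but either normalization only rescales the problem and does not affect the identification.

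The third step is the flow-value computation. For an electrical flow induced by potential $\x$, the energy dissipated equals the quadratic form $\x^{\Tr}\Lt\x$: indeed, with edge currents $f_e = (\C(\Wt)^{-1}\C\B\x)_e$ and resistances $r_e = \wt_e/c(e)^2$, the dissipated power is $\sum_e r_e f_e^2 = (\C\B\x)^{\Tr}(\Wt)^{-1}(\C\B\x) = \x^{\Tr}\B^{\Tr}\C(\Wt)^{-1}\C\B\x = \x^{\Tr}\Lt\x$. On the other hand, by conservation and the fact that $\Lt\x = \alpha(\e_s - \e_t)$, this energy also equals $\alpha(\x_s - \x_t) = \alpha$ since $\x_s - \x_t = 1$; so the net current (the "flow value") between $s$ and $t$ equals $\x^{\Tr}\Lt\x$, which is exactly the claimed quantity $(\xt)^{\Tr}\Lt\xt$.

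I do not anticipate a serious obstacle; the only subtlety worth pinning down carefully is the bookkeeping of which quantity is being called the "flow value" — the total current from $s$ to $t$ versus the energy — and verifying that with the chosen normalization $\x_s - \x_t = 1$ these two coincide, which is the content of $\x^{\Tr}\Lt\x = \alpha = (\x_s-\x_t)^{-1}\alpha(\x_s-\x_t)$. Everything else is the routine dictionary between constrained weighted least squares, Laplacian linear systems, and Thomson's principle for electrical networks.
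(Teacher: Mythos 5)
Your proposal is correct and follows essentially the same route as the paper: both define the edge currents $\zt=\C(\Wt)^{-1}\C\B\xt$, use the stationarity condition $\Lt\xt=-\Phi^{\Tr}\blambda$ of the saddle-point system to verify flow conservation at non-terminal nodes, and obtain the flow value as $(\xt)^{\Tr}\B^{\Tr}\zt=(\xt)^{\Tr}\Lt\xt$. Your extra discussion of Thomson's principle and the unit-current versus unit-voltage normalization is just a more explicit rendering of the same dictionary and introduces no gap.
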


\begin{proof}
Let $\xt$ be the solution to (\ref{eq:constrained-wls}) and $\zt=\C(\Wt)^{-1}\C\B\xt$. From $\frac{1}{2}(\xt)^{\Tr}\Lt\xt=\frac{1}{2}(\zt)^{\Tr}\C^{-1}\Wt\C^{-1}\zt$, $\zt$ minimizes the energy function $E(\z) = \frac{1}{2}\z^{\Tr}\C^{-1}\Wt\C^{-1}\z$. We now prove that $\zt$ satisfies the flow conservation constraint.
\begin{align*}
\B^{\Tr}\zt &= \B^{\Tr}\C(\Wt)^{-1}\C\B\xt=\Lt\xt\\
&=-\Phi^{\Tr}\blambda
\end{align*}
where the last equality is by the first equation in (\ref{eq:saddle-point-system}).
From $\Phi^{\Tr}=[\e_s\ \e_t]$, we have $(\B^{\Tr}\zt)_u=0$ when $u\not=s,t$. Also note that $\e^{\Tr}\B^{\Tr}\zt=0$. Thus $\zt$ is an electrical flow. From $\xt_s=1$ and $\xt_t=0$, the flow value of $\zt$ is given by
\begin{align*}
\mu(\zt) &= (\xt)^{\Tr}\B^{\Tr}\zt=(\xt)^{\Tr}\B^{\Tr}\C(\Wt)^{-1}\C\B\xt\\
&=(\xt)^{\Tr}\Lt\xt \qed
\end{align*}
\end{proof}

\subsection{Convergence of the IRLS algorithm}
We now apply a general proof of convergence from Beck~\cite{Beck:2013} to our case. 
Because of the smoothing parameter $\epsilon$ introduced in defining the reweighting vector $\wt$ as in (\ref{eq:define-reweight-vector}), the IRLS algorithm is effectively minimizing a smoothed version of the $\ell_1$-minimization problem (\ref{eq:l1-min}) defined by
\begin{equation} \label{eq:l1-min-smooth-version}
\MINone{\x}{ \Seps(\x)=\sum_{i=1}^m\sqrt{(\C\B\x)_i^2+\epsilon^2} }{\Phi\x=\f, \quad \x\in[0,1]^n.}
\end{equation}

Using results from Beck~\cite{Beck:2013}, we have the following nonasymptotic and asymptotic convergence rates of the IRLS algorithm.
\begin{theorem}[Theorem 4.1 of \citep{Beck:2013}]
Let $\{\xt\}_{l\ge 0}$ be the sequence generated by the IRLS method with smoothing parameter $\epsilon$. Then for any $T\ge 2$
\small
\begin{align*}
\Seps(\xT)-\Seps^{\ast}&\leq \max\left\{\left(\frac{1}{2}\right)^{\frac{T-1}{2}}(\Seps(\x^{(0)})-\Seps^{\ast}), \frac{8\lambda_{\max}(\B^{\Tr}\C^2\B)R^2}{\epsilon(T-1)}\right\}
\end{align*}
\normalsize
where $\Seps^{\ast}$ is the optimal value of (\ref{eq:l1-min-smooth-version}), and $R$ is the diameter of the level set of (\ref{eq:joint-min}) (see (3.7) in \citep{Beck:2013}).
\end{theorem}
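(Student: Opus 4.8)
The statement is quoted verbatim from \cite{Beck:2013} (Theorem~4.1 there), so the plan is not to reprove a convergence theorem from scratch but to exhibit the IRLS iteration of Section~\ref{subsection:irls-algorithm} as an instance of Beck's general two-block alternating-minimization scheme and then to read off the two constants $\lambda_{\max}(\B^{\Tr}\C^2\B)/\epsilon$ and $R$. First I would rewrite the joint problem (\ref{eq:joint-min}) in Beck's template $\min\{H(\x,\w):\x\in X,\ \w\in W\}$ with $X=\{\x:\Phi\x=\f\}\cap[0,1]^n$ and $W=\mathbb{R}^m_{>0}$, observing that $H$ is jointly convex on $X\times W$ (each term $a^2/w+w$ is jointly convex in $(a,w)$ for $w>0$, and $\x\mapsto(\C\B\x)_i$ is linear) and that partial minimization over the block $\w$ has the closed form recorded in (\ref{eq:define-reweight-vector}): $\wt_i=\sqrt{(\C\B\xtm)_i^2+\epsilon^2}$ is exactly the minimizer over $w_i>0$ of the $\epsilon$-perturbed coordinate objective, whose value equals $\sqrt{(\C\B\xtm)_i^2+\epsilon^2}$. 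Hence the function that the alternating scheme effectively minimizes over $\x$ is the smoothed objective $\Seps$ of (\ref{eq:l1-min-smooth-version}), not the original $\ell_1$ objective (\ref{eq:l1-min}) --- this is why the bound is stated in terms of $\Seps$ and degrades as $\epsilon\to 0$. Step~2 is the exact minimizer of $H(\cdot,\wt)$ over $\x\in X$; by Proposition~\ref{prop:interval-constraint-satisfaction} the box constraint is inactive at that solution, so it reduces to the equality-constrained quadratic program (\ref{eq:constrained-wls}), which is what we actually solve, and this is precisely the ``$\x$-update'' required by Beck's framework.

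With the correspondence in place I would verify the hypotheses of Beck's Theorem~4.1. (i) Joint convexity and closedness of $H$ on $X\times W$, already noted. (ii) Lipschitz smoothness of the reduced objective: $\Seps(\x)=\sum_i\sqrt{(\C\B\x)_i^2+\epsilon^2}$ is the composition of the coordinatewise map $u\mapsto\sqrt{u^2+\epsilon^2}$ (whose second derivative is at most $1/\epsilon$) with the linear map $\C\B$, so $\nabla\Seps$ is Lipschitz with modulus $L=\|\C\B\|_2^2/\epsilon=\lambda_{\max}(\B^{\Tr}\C^2\B)/\epsilon$, exactly the constant appearing in the $O(1/T)$ term. (iii) Boundedness of the level set $\{(\x,\w):H(\x,\w)\le H(\x^{(0)},\w^{(0)})\}$, so that its diameter $R$ is finite and the optimal set of (\ref{eq:joint-min}) is nonempty: the $\x$-component lies in the compact box $[0,1]^n$, and on the level set each coordinate satisfies $\tfrac{1}{2}(w_i+\epsilon^2/w_i)\le\tfrac{1}{2}(w_i+(\C\B\x)_i^2/w_i)\le H(\x^{(0)},\w^{(0)})$, which bounds $w_i$ both above and away from $0$; hence the level set is compact and $R<\infty$.

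Given (i)--(iii), I would invoke \cite{Beck:2013} directly to obtain the displayed two-regime bound: the geometric term $(\tfrac{1}{2})^{(T-1)/2}(\Seps(\x^{(0)})-\Seps^{\ast})$ dominates in the initial regime, the sublinear term $\tfrac{8LR^2}{T-1}=\tfrac{8\lambda_{\max}(\B^{\Tr}\C^2\B)R^2}{\epsilon(T-1)}$ dominates once the iterate is close to optimal, and the $\max$ absorbs the transition between them. The only genuinely nonroutine part of the argument is the reduction itself: checking that the constrained WLS step (\ref{eq:constrained-wls}), with the box constraint dropped as justified by Proposition~\ref{prop:interval-constraint-satisfaction}, matches Beck's template for the non-smooth block, and that the $\w$-minimization we perform is the $\epsilon$-regularized one that makes the scheme minimize $\Seps$ rather than the raw $\ell_1$ objective; once that bookkeeping is settled, the constants $L$ and $R$ come straight out of the smoothness and compactness computations above.
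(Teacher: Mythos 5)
The paper offers no proof of this statement at all---it is imported verbatim as Theorem~4.1 of Beck's paper---so your approach of exhibiting the IRLS iteration as an instance of Beck's alternating-minimization template and reading off the constants $L=\lambda_{\max}(\B^{\Tr}\C^2\B)/\epsilon$ and $R$ is exactly the (implicit) argument the authors rely on, and your verification of the hypotheses (joint convexity of the $\epsilon$-perturbed $H$, the $1/\epsilon$ bound on $\phi''$, compactness of the level set, and the use of Proposition~\ref{prop:interval-constraint-satisfaction} to drop the box constraint in the $\x$-update) is correct. This is essentially the same route as the paper, just with the bookkeeping made explicit.
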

\begin{theorem}[Theorem 4.2 of \citep{Beck:2013}]
Let $\{\xt\}_{l\ge 0}$ be the sequence generated by the IRLS method with smoothing parameter $\epsilon$. Then there exists $K>0$ such that for all $T\ge K+1$
\begin{align*}
\Seps(\xT)-\Seps^{\ast}&\leq\frac{48R^2}{\epsilon(T-K)}
\end{align*}
\end{theorem}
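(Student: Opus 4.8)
The plan is to recognize that the smoothed problem (\ref{eq:l1-min-smooth-version}) together with the two IRLS steps (\ref{eq:define-reweight-vector})--(\ref{eq:constrained-wls}) is an instance of the exact two-block alternating minimization scheme analyzed by Beck~\cite{Beck:2013}, so that the statement is verbatim his Theorem~4.2; the work is to check that his standing assumptions hold in our setting and to recall which structural fact produces the clean constant~$48$.

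First I would write $\Seps(\x)=\min_{\w\ge 0}H(\x,\w)$ with the coupling function $H(\x,\w)=\tfrac12\sum_i\big((\C\B\x)_i^2+\epsilon^2\big)/\w_i+\tfrac12\sum_i\w_i$ (the smoothing $\epsilon^2$ carried in the numerator, as is implicit in (\ref{eq:define-reweight-vector})) and observe that the $\w$-block minimizer of $H$ is exactly the closed-form reweighting (\ref{eq:define-reweight-vector}), while the $\x$-block minimization of $H$ over $\{\Phi\x=\f\}$ has the same objective as (\ref{eq:constrained-wls}) up to an additive constant, hence the same minimizer $\xt$. Then I would verify Beck's two assumptions: (i) $H$ is jointly convex and has a unique minimizer in each block on the feasible set --- jointly convex because each summand $\big((\C\B\x)_i^2+\epsilon^2\big)/\w_i+\w_i$ is jointly convex on $\{\w_i>0\}$, and uniquely $\x$-minimizable because the Hessian $\Lt=\B\Tr\C(\Wt)^{-1}\C\B$ restricted to $\{\Phi\x=\f\}$ is positive definite by Proposition~\ref{prop:nonsingular} (equivalently, $\tLt$ is nonsingular); (ii) the sublevel set $\{(\x,\w):H(\x,\w)\le H(\x^{(0)},\w^{(0)})\}$ is compact --- closed by continuity and bounded because (\ref{eq:boundary-constraint}) confines $\x$ to $[0,1]^n$ and the inequality forces $0\le\w_i\le 2H(\x^{(0)},\w^{(0)})$. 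In particular the diameter $R$ of this level set is finite, so the bound is meaningful.

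From the preceding theorem (Beck's Theorem~4.1, already quoted) I would extract that $\Seps(\xt)\downarrow\Seps^\ast$ and that the iterates remain in the compact level set, hence have all limit points in the nonempty compact convex optimal set $X^\ast$ of (\ref{eq:l1-min-smooth-version}); moreover, since $z\mapsto\sqrt{z^2+\epsilon^2}$ is strictly convex, $\C\B\x$ is constant on $X^\ast$, so the limiting reweighting vector $\w^\ast$ with $\w^\ast_i=\sqrt{(\C\B\x^\ast)_i^2+\epsilon^2}$ is well defined. The engine of the $O(1/T)$ rate is a one-step decrease inequality of the form $a_l\le a_{l-1}-\gamma_l\,a_{l-1}^2$ with $a_l=\Seps(\xt)-\Seps^\ast$: using $\Seps(\xtm)=H(\xtm,\wt)$ (because $\wt$ minimizes the $\w$-block at $\xtm$) and $\Seps(\xt)\le H(\xt,\wt)$, the decrease $a_{l-1}-a_l$ is at least $H(\xtm,\wt)-H(\xt,\wt)=\tfrac12\norm{\xtm-\xt}_{\Lt}^2$, the exact reduction from minimizing the convex quadratic $\x\mapsto H(\x,\wt)$ (the cross term drops because $\Lt\xt=-\Phi\Tr\blambda$ and $\Phi(\xtm-\xt)=\0$); combining this with the convexity inequality $a_{l-1}\le\langle\Lt\xtm,\xtm-\x^\ast\rangle=\langle\Lt(\xtm-\xt),\xtm-\x^\ast\rangle$ and Cauchy--Schwarz in the $\Lt$-seminorm gives $\gamma_l\ge 1/(2\norm{\xtm-\x^\ast}_{\Lt}^2)$. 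Bounding $\norm{\xtm-\x^\ast}_{\Lt}^2$ globally by $\lambda_{\max}(\B\Tr\C^2\B)R^2/\epsilon$ reproduces the sublinear term of the preceding theorem; the sharpening in Theorem~4.2 is that, once the iterates have entered a neighborhood of $X^\ast$ where $\wt$ has stabilized near $\w^\ast$, a refined estimate of this same decrease removes the $\lambda_{\max}(\B\Tr\C^2\B)/\epsilon$ factor and yields $\gamma_l\ge\epsilon/(48R^2)$; let $K$ be the (existence-only) index past which this holds.

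For $l>K$ the recursion $a_l\le a_{l-1}-\tfrac{\epsilon}{48R^2}a_{l-1}^2$ gives $1/a_l-1/a_{l-1}=(a_{l-1}-a_l)/(a_{l-1}a_l)\ge\tfrac{\epsilon}{48R^2}\cdot(a_{l-1}/a_l)\ge\tfrac{\epsilon}{48R^2}$, so summing over $K<l\le T$ yields $1/a_T\ge\tfrac{\epsilon}{48R^2}(T-K)$, i.e. $\Seps(\xT)-\Seps^\ast\le 48R^2/(\epsilon(T-K))$ --- the claimed bound; note that $\Seps(\x^{(0)})-\Seps^\ast$ has dropped out completely (a hallmark of telescoping $1/a$-recursions), which is precisely why only the existence, not the size, of $K$ is asserted. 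The main obstacle is the sharpened per-step decrease with the explicit constant: one must quantify how close to $X^\ast$ the iterates must be, show that the reweighting vectors are then within a controlled factor of $\w^\ast$, and push the quadratic-minimization decrease through without losing more than the stated constant --- this local estimate is exactly the content of Beck's Theorem~4.2, which I would invoke directly, while the framework check, the telescoping, and assembling the condition $T\ge K+1$ are routine.
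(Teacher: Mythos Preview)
Your proposal is correct and aligns with the paper's treatment: the paper gives no proof here at all, simply quoting Beck's Theorem~4.2 verbatim and relying on the surrounding discussion to place the IRLS scheme in his alternating-minimization framework. You go further than the paper by explicitly checking Beck's hypotheses (joint convexity of $H$, block-unique minimizers via Proposition~\ref{prop:nonsingular}, level-set compactness) and by sketching the decrease-and-telescope mechanism behind the rate, but the core approach --- defer to Beck for the sharpened constant --- is the same.
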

Regarding the convergence property of the iterates $\{\xt\}_{l\ge 0}$, we have the following theorem.
\begin{theorem}[Lemma 4.3 of \citep{Beck:2013}]
Let $\{\xt\}_{l\ge 0}$ be the sequence generated by the IRLS method with smoothing parameter $\epsilon$. Then
\begin{enumerate}[(i)]
\item any accumulation point of $\{\xt\}_{l\ge 0}$ is an optimal solution of (\ref{eq:l1-min-smooth-version}).
\item for any $i\in\{1,\ldots,m\}$, the sequence $\{|(\C\B\xt)_i|\}_{l\ge 0}$ converges.
\end{enumerate}
\label{theorem:sequence-converge}
\end{theorem}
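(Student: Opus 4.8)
The plan is to apply the cited results of Beck \cite{Beck:2013} more or less directly, since the IRLS algorithm here is an instance of the general alternating-minimization scheme analyzed there. The statement is phrased as ``Lemma 4.3 of \cite{Beck:2013}'', so the real work is checking that our constrained setting satisfies the hypotheses of that lemma; once verified, the two conclusions follow. First I would record the structural facts established earlier in this section: the joint objective $H(\x,\w)$ in \eqref{eq:joint-min} is, for fixed $\x$, separately strictly convex in each $\w_i>0$ with a closed-form minimizer $\wt_i=\sqrt{(\C\B\x)_i^2+\epsilon^2}$, and for fixed $\w>0$ it is a convex quadratic in $\x$ over the affine-plus-box feasible set $\{\Phi\x=\f,\ \x\in[0,1]^n\}$, which by Proposition \ref{prop:nonsingular} and Proposition \ref{prop:interval-constraint-satisfaction} has a unique minimizer lying in the interior box $[0,1]^n$. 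Thus the alternating minimization is well-defined and each partial minimization is over a nonempty compact convex set, so $H(\xt,\wt)$ is nonincreasing and bounded below, hence convergent.

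Next I would make the identification that $\inf_{\w>0} H(\x,\w) = \Seps(\x)$, so that minimizing $H$ over both blocks is exactly the smoothed problem \eqref{eq:l1-min-smooth-version}; this is the link that lets Beck's convergence analysis for the block problem translate into statements about $\Seps$. For conclusion (i), I would argue that any accumulation point $\bar\x$ of $\{\xt\}$ arises together with an accumulation point $\bar\w$ of $\{\wt\}$ (the $\wt$ are continuous functions of the bounded $\xt$, hence themselves bounded, so we may pass to a common subsequence), that $(\bar\x,\bar\w)$ is a fixed point of the alternating map by continuity of both partial-minimization maps, and that a fixed point of alternating minimization for a function that is convex in each block and smooth in the relevant region is a global minimizer of $H$ --- here one invokes the first-order optimality conditions in each block, exactly as in Beck's argument. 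Then $\bar\x$ minimizes $\Seps$. For conclusion (ii), the convergence of $\{H(\xt,\wt)\}$ combined with the strong-convexity-type gap between consecutive iterates forces $\|\wt - \w^{(l-1)}\|\to 0$ (a standard consequence: the decrease at each step lower-bounds a squared distance), and since $\wt_i = \sqrt{(\C\B\xtm)_i^2+\epsilon^2}$ is a strictly monotincreasing, bi-Lipschitz function of $|(\C\B\xtm)_i|$ on the relevant compact range, the sequence $\{|(\C\B\xt)_i|\}$ is Cauchy, hence convergent.

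The step I expect to be the main obstacle --- or at least the one requiring the most care --- is verifying that the box constraint $\x\in[0,1]^n$ does not break the applicability of Beck's Lemma 4.3, which in its original form is stated for IRLS on an unconstrained or affinely-constrained $\ell_1$ problem. The key observation that resolves this is Proposition \ref{prop:interval-constraint-satisfaction}: every iterate, and in fact the minimizer of the WLS subproblem over the affine set alone, automatically lands in $[0,1]^n$. Consequently the box constraint is inactive throughout, the $\x$-update coincides with the unconstrained-over-the-affine-set update analyzed by Beck, and the feasible set in \eqref{eq:l1-min-smooth-version} may be replaced by the affine set $\{\Phi\x=\f\}$ without changing the iterates or the optimal solutions reached. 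After making that reduction explicit, the remaining details are exactly the bookkeeping in Beck's proof, and I would simply cite it rather than reproduce it.
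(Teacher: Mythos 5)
The paper does not actually prove this statement: it is imported verbatim as Lemma 4.3 of Beck's paper, so your plan of ``verify the hypotheses, then cite'' is consistent with what the authors do, and your key verification --- that Proposition~\ref{prop:interval-constraint-satisfaction} makes the box constraint $\x\in[0,1]^n$ inactive, so the $\x$-update over the affine set $\{\Phi\x=\f\}$ coincides with the update over the true feasible set and Beck's unconstrained-block analysis applies --- is exactly the right thing to check. One small slip in the setup: with $H$ as literally written in \eqref{eq:joint-min}, $\inf_{\w>0}H(\x,\w)=\|\C\B\x\|_1$, not $\Seps(\x)$; the identification with $\Seps$ requires the smoothed surrogate in which $\epsilon^2$ is added to the numerator of each term (this is what the reweighting rule \eqref{eq:define-reweight-vector} actually minimizes), and Beck's analysis is carried out for that smoothed $H$.

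The genuine gap is in your argument for part (ii). You deduce $\|\wt-\w^{(l-1)}\|\to 0$ from the sufficient-decrease estimate and then conclude that $\{|(\C\B\xt)_i|\}$ is Cauchy. Vanishing successive differences do not imply the Cauchy property (partial sums of the harmonic series are the standard counterexample), so this step fails as written. The correct route --- and the one Beck's proof takes --- goes through part (i) together with a strict convexity observation: the map $\z\mapsto\sum_i\sqrt{\z_i^2+\epsilon^2}$ is strictly convex, so although \eqref{eq:l1-min-smooth-version} may have many optimal $\x$, the vector $\C\B\x^{\ast}$ is the same at every optimal solution. Since $\{\xt\}$ is bounded (it lives in $[0,1]^n$) and by (i) every accumulation point is optimal, every accumulation point of the bounded scalar sequence $\{|(\C\B\xt)_i|\}$ equals this common value, which forces convergence. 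If you intend to reproduce the argument rather than cite it, you need this uniqueness-of-$\C\B\x^{\ast}$ step; the successive-difference estimate alone is not enough.
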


\subsection{A Cheeger-type inequality}
In Section \ref{subsection:irls-algorithm} we have shown that the IRLS algorithm relates the undirected $\st$ \mincut problem to solving a sequence of reduced Laplacian systems. Here we prove a Cheeger-type inequality that characterizes the undirected $\st$ \mincut by the second finite generalized eigenvalue of a particular matrix pencil. We start with some notations and definitions. Let $C=2\sum_{\{u,v\}\in\Es}c(\{u,v\})$ i.e., twice the total edge weights of $\Gs$. We define a weight function $\bd(\cdot)$ on $\Vs$ to be
\begin{align}
\label{eq:cheeger-node-weight}
\bd(u)&=\left\{\begin{array}{ll}
C & u=s \mbox{ or } u=t\\
0 & \mbox{otherwise}
\end{array}\right.
\end{align}
The volume of a subset $\Ss\subset \Vs$ is defined by $\vol(\Ss)=\sum_{u\in\Ss}\bd(u)$. As usual we define $\phi(\Ss)=\frac{\cut(\Ss,\Sscomp)}{\min\{\vol(\Ss),\vol(\Sscomp)\}}$, and the expansion parameter to be $\phi=\min_{\Ss\subset \Vs}\phi(\Ss)$. We have $\phi(S)<\infty$ if and only if $(\Ss,\Sscomp)$ is an $\st$ cut. And $\phi(S)=\phi$ if and only if $(\Ss,\Sscomp)$ is an $\st$ \mincut.
Let $\D$ be the $n\times n$ matrix with its main diagonal being $\bd$, and let $\L$ be the Laplacian of $\Gs$. We consider the generalized eigenvalue problem of the matrix pencil $(\L,\D)$
\begin{equation} \label{eq:spectral}
\L\x = \lambda \D \x
\end{equation}
Because $\D$ is of rank 2, the pencil $(\L,\D)$ has only two finite generalized eigenvalues (see Definition 3.3.1 in \citep{ToledoA:2012}). We denote them by $\lambda_1\le\lambda_2$. We state a Cheeger-type inequality for undirected $\st$ \mincut as follows.
\begin{theorem} Let $\phi$ be the min-cut and let $\lambda_2$ be the non-zero eigenvalue of \eqref{eq:spectral}, then
$\frac{\phi^2}{2}\leq \lambda_2 \leq 2\phi$.%
\label{theorem:cheeger-type-inequality}
\end{theorem}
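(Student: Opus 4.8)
The plan is to establish the two inequalities separately, following the familiar pattern of spectral Cheeger-type arguments but adapted to the rank-2 pencil $(\L,\D)$ with the $\st$-concentrated node weight $\bd$.

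For the \emph{easy direction} $\lambda_2 \le 2\phi$, I would use the Courant--Fischer variational characterization of the second finite generalized eigenvalue. Since $\D$ has rank $2$ and $\ker(\D)$ is spanned by the coordinate vectors for all non-terminal nodes, while $\e \in \ker(\L)$, the eigenvalue $\lambda_1 = 0$ corresponds to $\x = \e$. Then
\[
\lambda_2 = \min_{\substack{\x^{\Tr}\D\e = 0 \\ \x^{\Tr}\D\x \ne 0}} \frac{\x^{\Tr}\L\x}{\x^{\Tr}\D\x}.
\]
Given an $\st$ \mincut $(\Ss,\Sscomp)$ achieving $\phi$, I would plug in the indicator-like test vector $\x = \1_{\Ss}$ (or a shifted/scaled version of it chosen to satisfy the $\D$-orthogonality-to-$\e$ constraint, which only involves the two terminals). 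Then $\x^{\Tr}\L\x = \cut(\Ss,\Sscomp)$ and $\x^{\Tr}\D\x$ equals $\bd(s)\x_s^2 + \bd(t)\x_t^2 = C(\x_s^2 + \x_t^2)$; after the shift that enforces the constraint, $\x_s$ and $x_t$ take values like $\pm 1/2$, so $\x^{\Tr}\D\x = C/2 = \min\{\vol(\Ss),\vol(\Sscomp)\}$ since every $\st$ cut has $\vol(\Ss) = \vol(\Sscomp) = C/2$ by construction of $\bd$. This yields $\lambda_2 \le \cut(\Ss,\Sscomp)/(C/2)$, and checking that this equals $\phi$ up to the constant gives $\lambda_2 \le 2\phi$. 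The constant $2$ tracks exactly through the shift.

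For the \emph{hard direction} $\phi^2/2 \le \lambda_2$, I would take $\x$ to be the eigenvector attaining $\lambda_2$, normalized so that $\x_s = 1$, $\x_t = 0$ (possible after an affine transformation preserving $\L$ and rescaling, using that the two terminals carry all the $\D$-weight); note this is exactly the "voltage vector" normalization from Section \ref{subsection:irls-algorithm}. Then I would run the standard sweep-cut / threshold-rounding argument: order the vertices by $\x$-value, consider the cuts $\Ss_\theta = \{u : \x_u > \theta\}$ for a random threshold $\theta \in (0,1)$, and bound the expected cut value against $\phi$ times the expected volume cut. The key computation is a Cauchy--Schwarz step relating $\sum_{\{u,v\}\in\Es} c(\{u,v\})|\x_u - \x_v|$ (which controls $\mathbb{E}[\cut(\Ss_\theta)]$) to $\sqrt{\x^{\Tr}\L\x} \cdot \sqrt{\sum_{\{u,v\}} c(\{u,v\})(\x_u+\x_v)^2}$, and then bounding the second factor by a constant times $C$ (using $0 \le \x \le 1$ and $\sum c(\{u,v\}) \le C/2$). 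Combining with $\mathbb{E}[\min\{\vol(\Ss_\theta),\vol(\Sscomp_\theta)\}] \ge$ a suitable multiple of $C$ (here the structure of $\bd$ — mass $C$ at $s$ with $\x_s=1$ and mass $C$ at $t$ with $\x_t=0$ — makes the volume term clean) and $\x^{\Tr}\L\x = \lambda_2 \x^{\Tr}\D\x = \lambda_2 C$ produces $\phi \le \sqrt{2\lambda_2}$, i.e. $\phi^2/2 \le \lambda_2$.

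The main obstacle I anticipate is the hard direction, specifically making the sweep-cut argument bookkeeping come out with the clean constant $2$ rather than a worse one. Two points need care: (i) the normalization step that sends the eigenvector to a genuine voltage vector with $\x_s=1,\x_t=0$ must be justified — I would argue $\lambda_2 > 0$ (so $\x \notin \ker(\L)$, hence non-constant), and that subtracting a multiple of $\e$ and rescaling is legitimate because $\D$ only sees the terminals and the constraint $\x^{\Tr}\D\e=0$ plus non-degeneracy pins down that $\x_s \ne \x_t$; and (ii) the Cauchy--Schwarz bound on $\sum c(\{u,v\})(\x_u+\x_v)^2$ must be tightened using $(\x_u+\x_v)^2 \le 2(\x_u^2+\x_v^2) \le$ something expressible in terms of $\x^{\Tr}\D\x$ — but since $\D$ is supported only on $s,t$, this is not immediate and may instead require the cruder bound $(\x_u+\x_v)^2 \le 4$ together with $\sum_{\{u,v\}} c(\{u,v\}) = C/2$, which is what delivers the factor feeding into $\phi^2/2$. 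I would double-check whether the standard Cheeger proof's factor can be imported verbatim or whether the unusual weight $\bd$ forces a slightly different constant, and adjust the statement's constants if needed.
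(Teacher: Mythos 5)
Your strategy coincides with the paper's: the upper bound comes from plugging a cut-indicator test vector into the Rayleigh quotient $\x^{\Tr}\L\x/\x^{\Tr}\D\x$ restricted to $\e^{\Tr}\D\x=0$, and the lower bound comes from a sweep cut over the normalized solution vector closed with Cauchy--Schwarz, i.e.\ exactly the Chung-style local Cheeger argument the paper generalizes. The outline is sound, but three of the bookkeeping points you flagged do need repair. First, for any $\st$ cut, $\min\{\vol(\Ss),\vol(\Sscomp)\}=C$, not $C/2$: each side contains exactly one terminal and each terminal carries weight $C$. (The paper uses the $\{1,-1\}$ indicator, getting $\x^{\Tr}\L\x=4\,\cut(\Ss,\Sscomp)$ and $\x^{\Tr}\D\x=2C$, hence $\lambda_2\le 2\,\cut(\Ss,\Sscomp)/C=2\phi$.) Second, once you renormalize to $\x_s=1,\x_t=0$, the vector is no longer $\D$-orthogonal to $\e$ and its Rayleigh quotient is $\lambda_2/2$, so $\x^{\Tr}\L\x=\lambda_2C/2$, not $\lambda_2C$. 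Third, and most substantively, your Cauchy--Schwarz step is mismatched with your threshold distribution: a uniform threshold on the values gives expected cut $\sum c(\{u,v\})|\x_u-\x_v|$, whereas Cauchy--Schwarz produces the factor $\bigl(\sum c(\{u,v\})(\x_u+\x_v)^2\bigr)^{1/2}$ only when applied to $\sum c(\{u,v\})|\x_u^2-\x_v^2|$, i.e.\ when you threshold on the squared values. As written that step fails.

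Either of two repairs works. The paper's route thresholds on the squares: it telescopes $\g_{+}(v_i)^2-\g_{+}(v_{i+1})^2$ against $\cut(\Ss_i,\Sscomp_i)\ge\alpha\,\tvol(\Ss_i)=\alpha C$ and bounds $\sum c(\{u,v\})(\g_{+}(u)+\g_{+}(v))^2\le 2\g_{+}(s)^2\bd(s)$, yielding $\lambda_2\ge\alpha^2/2\ge\phi^2/2$. Alternatively, your linear-threshold setup closes with the plain Cauchy--Schwarz
$\sum c(\{u,v\})|\x_u-\x_v|\le\sqrt{\sum c(\{u,v\})}\cdot\sqrt{\x^{\Tr}\L\x}=\sqrt{C/2}\cdot\sqrt{\lambda_2C/2}$;
since $\min\{\vol(\Ss_\theta),\vol(\Sscomp_\theta)\}=C$ for \emph{every} $\theta\in(0,1)$ (the terminals always land on opposite sides), the expected cut is at least $\phi C$, giving $\phi C\le (C/2)\sqrt{\lambda_2}$, i.e.\ $\lambda_2\ge 4\phi^2$ --- in fact stronger than the stated bound, precisely because the weight $\bd$ concentrated on $s,t$ makes the volume term constant and removes the usual Cheeger loss. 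Whichever route you take, you also need $\x\in[0,1]^n$ (equivalently $-1\le\g\le 1$) after normalization; the paper obtains this by the same Stieltjes-matrix argument as Proposition \ref{prop:interval-constraint-satisfaction}, and your proof should cite or reproduce it rather than assume it.
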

We defer this proof to the appendix (Section~\ref{sec:cheeger-proof}) because it is a straightforward generalization of the proof of Theorem 1 in \citep{Chung:2007}. It also follows from a recent generalized Cheeger inequality by~\cite{Koutis:2014}.

\section{Parallel implementation of the IRLS algorithm}
\label{section:parallel-implementation-irls}
In this section, we describe our parallel implementation of the IRLS algorithm. In addition, we also detail the rounding procedure we use to obtain an approximate $\st$ \mincut from the voltage vector $\xT$ output from the IRLS iterations.

\subsection{A parallel iterative solver for the sequence of reduced Laplacian systems}
\label{subsection:parallel-iterative-solver-the sequence}
The main focus of the IRLS algorithm is to solve a sequence of reduced Laplacian systems for $l=1,\ldots,T$. We keep $\tLt=\Z^{\Tr}\Lt\Z$ and set $\bt=-\Z^{\Tr}\Lt\e_s$ in the following; note that the system $\tLt \v = \bt$ is equivalent to \eqref{eq:constrained-wls} as explained in the proof to Proposition \ref{prop:interval-constraint-satisfaction}. 
Because $\tLt$ is symmetric positive definite, we can use the preconditioned conjugate gradient (PCG) algorithm to solve it in parallel~\citep{Saad:2003}. For preconditioning, we choose to use the block Jacobi preconditioner because of its high parallelism. Specifically, we extract from $\tLt$ a $p\times p$ block diagonal matrix $\tMt=\diag(\tMt_1,\ldots,\tMt_p)$, where $p$ is the number of processes. We use $\tMt$ to precondition solving the linear system of $\tLt$. In each preconditioning step of the PCG iteration, we need to solve a linear system involving $\tMt$. Because of the block diagonal structure of $\tMt$, the $p$ independent subsystems
\begin{align}
\label{eq:block-jacobi-subsystem}
\tMt_j \y_j&=\r_j\ j=1,\ldots,p
\end{align}
can be solved in parallel. We consider two strategies to solve the subsystems (\ref{eq:block-jacobi-subsystem}): 
(I) LU factorization to exactly solve (\ref{eq:block-jacobi-subsystem}); and (II) incomplete LU factorization with zero fill-in (ILU(0)) to approximately solve (\ref{eq:block-jacobi-subsystem}).
Given that the nonzero structure of $\tLt$ never changes (it's fully determined by the non-terminal graph $\tGs$), the symbolic factorization of LU and ILU(0) needs to be done only once. As we will demonstrate in Section \ref{section:experimental-results}, applying (I) or (II) highly depends on the application. Given that we are solving a sequence of related linear systems, where the solution $\vt$ is used to define $\tLtp$ and $\btp$, we adopt a \textit{warm start} heuristic. Specifically, we let $\vt$ be the initial guess to the parallel PCG iterations for solving the $(l+1)$-th linear system. In contrast, \textit{cold start} involves always using the zero initial guess.

\subsection{Parallel graph partitioning} \label{sec:parallel-partitioning}
Given a fixed number of processes $p$, we consider the problem of extracting from $\tLt$ an effective and efficient block Jacobi preconditioner $\tMt$. For this purpose, we impose the following two objectives:
\begin{enumerate}[(i)]
\item The Frobenius norm ratio $\|\tMt\|_{\text{F}}/\|\tLt\|_{\text{F}}$ is as large as possible.
\item The sizes and number of nonzeros in $\tMt_j,\ j=1,\ldots,p$ are well balanced.
\end{enumerate}
The above two objectives can be formulated as a $k$-way graph partitioning problem \citep{KarypisV:1998,KarypisV:1999}. Let $\tGst$ be the reweighted non-terminal graph whose edge weights are determined by the off-diagonal entries of $\tLt$. Ideally we need to apply the $k$-way graph partitioning to $\tGst$ for each $l=1,\ldots,T$. However, according to (\ref{eq:reweighted-laplacian}) we observe that large edge weights tend to be large after reweighting because of the $\C^2$ in (\ref{eq:reweighted-laplacian}). This motivates us to reuse graph partitioning. Specifically, we use the parallel graph partitioning package ParMETIS\footnote{\url{http://glaros.dtc.umn.edu/gkhome/metis/parmetis/overview}} \citep{KarypisV:1998} to partition the non-terminal graph $\tGs$ into $p$ components with the objective of minimizing the weighted edge cut, while balancing the volumes of different components. We then use this graph partitioning result for all the following IRLS iterations. The graph partitioning result implies a reordering of the nodes in $\Vs\backslash\{s,t\}$ such that nodes belonging to the same component are numbered continuously. Let the permutation matrix of this reordering be $\P$. We then extract $\tMt$ as the block diagonal submatrix of the reordered matrix $\P\tLt\P^{\Tr}$.

\subsection{Graph distribution and parallel graph reweighting} \label{sec:graph-distribution}
The input graph $\Gs$ consists of the non-terminal graph $\tGs$ and the terminal edges $\Es^{\Ts}$. Let $\tL$ be the graph Laplacian of $\tGs$. For distributing $\tGs$ among $p$ processes, we partition the reordered matrix $\P\tL\P^{\Tr}$ into $p$ block rows and assign each block row to one process. The graph partitioning also induces a partition of the terminal edges as $\Es^{\Ts}=\bigcup\limits_{j=1}^p\Es^{\Ts}_j$, and each $\Es^{\Ts}_j$ is assigned to one process. Accordingly we also partition and distribute the permuted vectors $\P\xt$ and $\P\bt$ among $p$ processes. At the $l$-th IRLS iteration, we need to form the reweighted Laplacian (\ref{eq:reweighted-laplacian}) using the voltage vector $\P\xtm$. After the $p$ processes have communicated to acquire their needed components of $\P\xtm$, they can form their local block row of $\P\tLt\P^{\Tr}$ and $\P\bt$ in parallel. Thus the cost of one parallel graph reweighting is equivalent to that of one parallel matrix-vector multiplication. Note that the $k$-way graph partitioning usually results in a small number of edges between different components. This characteristic helps to reduce the process communication cost.

\subsection{A two-level rounding procedure}
\label{subsection:two-level-rounding-procedure}
Let $\xT$ be the voltage vector output from running the IRLS algorithm for $T$ iterations. We consider the problem of converting $\xT$ to a binary cut indicator vector. A standard technique for this purpose is sweep cut \citep{Vishnoi:2012}, which is based on sorting the nodes according to $\xT$. Here we propose a heuristic rounding procedure, which we refer to as the \textit{two-level rounding procedure}. Empirically we observe the components of $\xt$ tend to converge to $\{0,1\}$ as $l$ gets larger. We call this effect the \textit{node voltage polarization}. This observation motivates the idea of coarsening the graph $\Gs$ using $\xT$. Given $\xT$ and another two parameters $\gamma_0$ and $\gamma_1$, we partition $\Vs=\Ss_0\cup\Ss_1\cup\Ss_c$ as follows:
\begin{align*}
u & \in \left\{\begin{array}{ll}
\Ss_0, & \mbox{ if } \xT_u\leq \gamma_0 \\
\Ss_1, & \mbox{ if } \xT_u\geq \gamma_1 \\
\Ss_c, & \mbox{ if } \gamma_0 < \xT_u < \gamma_1
\end{array}\right.
\end{align*}
Intuitively, we coalesce node $u$ with the sink (source) node if $\xT_u$ is small (large) enough, and otherwise we leave it alone. Contracting $\Ss_0$ ($\Ss_1$) to a single node $t_c$ ($s_c$), we define a coarsened graph $\Gs_c=(\Vs_c,\Es_c)$ with node set $\Vs_c=\{s_c,t_c\}\cup\Ss_c$. 

The weighted edge set $\Es_c$ of the coarsened graph is defined according to the following cases
\begin{compactitem}
\item $\{u,v\}\in\Es_c$ with weight $c(\{i,j\})$ if $u,v \in \Ss_c$ such that $\{u,v\}\in\Es$.
\item $\{s_c,u\}\in\Es_c$ with weight $\sum_{v\in \Ss_1}c(\{u,v\})$.
\item $\{t_c,u\}\in\Es_c$ with weight $\sum_{v\in \Ss_0}c(\{u,v\})$.
\item $\{s_c,t_c\}\in\Es_c$ with weight $\sum_{u\in \Ss_1}\sum_{v\in \Ss_0}c(\{u,v\})$.
\end{compactitem}

An $\st$ cut on $\Gs_c$ induces an $\st$ cut on $\Gs$. The idea of the two-level rounding procedure is to solve the undirected $\st$ \mincut problem on the coarsened graph $\Gs_c$ using a combinatorial max-flow/min-cut solver, e.g., \citep{BoykovK:2004}; and then get the corresponding $\st$ cut on $\Gs$. Regarding the quality of such an acquired $\st$ cut on $\Gs$, we have the following proposition.
\begin{proposition}
Let $\Vs=\Ss_0\cup\Ss_1\cup\Ss_c$ be the partition generated by coarsening using $\xT$ and parameters $\gamma_0,\gamma_1$. If there is an $\st$ \mincut on $\Gs$ such that $\Ss_1$ is contained in the source side, and $\Ss_0$ is contained in the sink side, then an $\st$ \mincut on $\Gs_c$ induces an $\st$ \mincut on $\Gs$.
\end{proposition}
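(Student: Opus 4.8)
The plan is to exhibit a value-preserving correspondence between $\st$ cuts of $\Gs_c$ and a restricted family of $\st$ cuts of $\Gs$, and then invoke the hypothesis to show this family contains a global min-cut of $\Gs$. First I would set up the correspondence: given any $\st$ cut $(\T_s, \T_t)$ of $\Gs_c$ with $s_c \in \T_s$, $t_c \in \T_t$, define the induced cut on $\Gs$ by placing $\Ss_1$ together with all vertices of $\Ss_c$ assigned to $\T_s$ on the source side, and $\Ss_0$ together with the remaining $\Ss_c$ vertices on the sink side. The key computation is that $\cut_{\Gs}$ of the induced cut equals $\cut_{\Gs_c}(\T_s,\T_t)$: this is immediate from the four defining cases of $\Es_c$, since edges inside $\Ss_c$ are copied verbatim, while the contracted edges $\{s_c,u\}$, $\{t_c,u\}$, $\{s_c,t_c\}$ carry exactly the aggregated weights of the corresponding bundles of $\Gs$-edges that cross between $\Ss_1$, $\Ss_0$, and the two $\Ss_c$-sides. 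Conversely, every $\st$ cut of $\Gs$ that keeps all of $\Ss_1$ on the source side and all of $\Ss_0$ on the sink side — call these \emph{aligned} cuts — arises this way, with the same value.

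Next I would use the hypothesis. By assumption there is an $\st$ \mincut $(\Ss^\ast,\Sscomp{}^\ast)$ of $\Gs$ with $\Ss_1 \subseteq \Ss^\ast$ and $\Ss_0 \subseteq \Sscomp{}^\ast$; that is, an aligned min-cut exists. Hence
\[
\min_{\text{$\st$ cuts of }\Gs_c}\cut_{\Gs_c} \;=\; \min_{\text{aligned $\st$ cuts of }\Gs}\cut_{\Gs} \;=\; \min_{\text{all $\st$ cuts of }\Gs}\cut_{\Gs},
\]
where the first equality is the correspondence above and the second holds because the aligned family already attains the global minimum. Therefore, if $(\T_s,\T_t)$ is an $\st$ \mincut of $\Gs_c$, its induced cut on $\Gs$ has value equal to $\min_{\Gs_c}\cut_{\Gs_c}$, which by the displayed chain equals the $\st$ \mincut value of $\Gs$; so the induced cut is an $\st$ \mincut of $\Gs$.

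I would also dispose of the one degenerate point that needs care: one must check that the reduction is well-posed, i.e. that $s_c \neq t_c$ (equivalently $\gamma_0 < \gamma_1$, so $\Ss_0$ and $\Ss_1$ are disjoint) and that $s_c \in \T_s$, $t_c \in \T_t$ in any $\st$ cut of $\Gs_c$ by the definition of $\Gs_c$ having $s_c, t_c$ as its terminals — this is built into the construction, so there is nothing substantive to prove, but it is worth a sentence. The main obstacle, such as it is, is purely bookkeeping: verifying the weight-aggregation identity $\cut_{\Gs}(\text{induced}) = \cut_{\Gs_c}(\T_s,\T_t)$ by partitioning the edge set $\Es$ according to which of the blocks $\Ss_0, \Ss_1, \Ss_c$ each endpoint lies in, and matching each group against the appropriate one of the four bullet cases defining $\Es_c$. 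There is no real analytic difficulty; the proposition is essentially the standard fact that contracting a set of vertices that lies entirely on one side of some optimal cut does not change the optimal cut value.
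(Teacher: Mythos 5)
Your proof is correct. Note that the paper actually states this proposition without providing any proof, so there is nothing to compare against; your argument --- a value-preserving correspondence between $\st$ cuts of $\Gs_c$ and the ``aligned'' $\st$ cuts of $\Gs$ (those with $\Ss_1$ on the source side and $\Ss_0$ on the sink side), followed by the observation that the hypothesis guarantees the aligned family attains the global minimum --- is the standard contraction argument and correctly supplies the missing proof. The weight-aggregation check against the four defining cases of $\Es_c$ is exactly the right bookkeeping, and your side remark about well-posedness ($\Ss_0$ and $\Ss_1$ disjoint, with $s\in\Ss_1$ and $t\in\Ss_0$ since $\xT_s=1$ and $\xT_t=0$) is the only degenerate point worth a sentence.
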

Choosing the values of $\gamma_0$ and $\gamma_1$ is critical to the success of the two-level rounding procedure. On the one hand, we want to set their values conservatively so that not many nodes in $\Ss_0$ and $\Ss_1$ are coalesced to the wrong side. On the other hand, we want to set their values aggressively so that the size of $\Gs_c$ is much smaller than $\Gs$. We find a good trade off is achieved by clustering the components of $\xT$. Let the two centers returned from $K$-means on $\xT$ (with initial guesses $0.1$ and $0.9$) be $c_0$ and $c_1$, then we let $\gamma_0=c_0+0.05$ and $\gamma_1=c_1-0.05$.

\begin{algorithm}
\caption{\algname$(\Gs,s,t,p,T)$}
\begin{algorithmic}[1]
\label{alg:parallel-irls-min-cut}
\REQUIRE
A weighted and undirected graph $\Gs=(\Vs,\Es)$, the source node $s$, the sink node $t$, the number of processes $p$, and the number of IRLS iterations $T$.
\ENSURE
An approximate $\st$ \mincut on $\Gs$.
\STATE Let $\tGs$ be the non-terminal graph and $\Es^{\Ts}$ be the terminal edges. Use ParMETIS to partition $\tGs$ into $p$ components. Reorder and distribute $\tGs$ and $\Es^{\Ts}$ to $p$ processes accordingly.
\STATE Initialize $\x^{(0)}$ to be the solution to (\ref{eq:constrained-wls}) with $\W^{(0)}=\C$ using the parallel PCG solver.
\FOR{$l=1,\ldots,T$}
\STATE Execute the parallel IRLS algorithm by alternating between parallel graph reweighting and solving the reduced Laplacian system $\tLt \v = \bt$ using the parallel PCG solver.
\ENDFOR
\STATE Gather the voltage vector $\xT$ to the root process.
\STATE The root process applies a rounding procedure on $\xT$ to get an approximate $\st$ \mincut on $\Gs$.
\end{algorithmic}
\end{algorithm}
We summarize the algorithm \algname in Algorithm \ref{alg:parallel-irls-min-cut}. Note the main sequential part of \algname is the rounding procedure (either sweep cut or our two-level rounding).
\section{Related work}
\label{section:related}
Approaches to $\st$ \mincut based on $\ell_1$-minimization are common. Bhusnurmath and Taylor \citep{BhusnurmathT:2008}, for example, transform the $\ell_1$-minimization problem to its linear programming (LP) formulation, and then solve the LP using an interior point method with logarithmic barrier functions. Similar to our IRLS approach, their interior point method also leads to solving a sequence of Laplacian systems. 
In that vein, \citep{ChristianoKMST:2011} develops the then theoretically fastest approximation algorithms for $\st$ max-flow/\mincut problems in undirected graphs. Their algorithm involved a sequence of electrical flow problems defined via a multiplicative weight update.
We are still trying to formalize the relation between our IRLS update and the multiplicative weight update used in \citep{ChristianoKMST:2011}. More recent progress of employing electrical flows to approximately solve the undirected $\st$ max-flow/\mincut problems to achieve better complexity bounds include \citep{KelnerLOS:2014, LeeRS:2013}. In contrast, the main purpose of this manuscript is to implement a parallel $\st$ \mincut solver using this Laplacian paradigm, and evaluate its performance on a parallel computing platform. Beyond the electrical flow paradigm, which always satisfies the flow conservation constraint but relaxes the capacity constraint, the fastest combinatorial algorithms, like push-relabel and its variants \cite{CherkasskyG:1997,Goldberg:2009} and pseudoflow \citep{Hochbaum:2008}, relaxes the flow conservation constraint but satisfying the capacity constraint. However, parallelizing these methods is difficult.
\par
Recent advances on nearly linear time solver for Laplacian and symmetric diagonally dominant linear systems are \citep{KelnerOSZ:2013,KoutisMT:2011,LivneB:2012}.
We do not incorporate the current available implementations of \citep{KoutisMT:2011} and \citep{LivneB:2012} into \algname because they are sequential. Instead we choose the block Jacobi preconditioned PCG because of its high parallelism. We notice an idea similar to our two-level rounding procedure that is called \textit{flow-based rounding} has been proposed in \citep{Lang:2005}. This idea was later rigorously formalized as the FlowImprove algorithm \citep{AndersenL:2008}.

\section{Experimental results}
In this section, we describe experiments with \algname on several real world graphs. These experiments serve to empirically demonstrate the properties of \algname, and also evaluate its performance on large scale undirected $\st$ \mincut problems. In the following experiments, we only consider floating-point valued instances because most interesting applications of undirected $\st$ \mincut produce floating-point valued problems, e.g., FlowImprove \citep{AndersenL:2008}, image segmentation \citep{BoykovF:2006}, MRI analysis \citep{HernandoKHZ:2010}, and energy minimization in MRF \citep{KolmogorovZ:2004}. Thus, we do not include those state-of-the-art max-flow/min-cut solvers for integer valued problems in the following experiments, like \texttt{hipr-v3.4}\footnote{\url{http://www.igsystems.com/hipr/download.html}} \citep{CherkasskyG:1997} and \texttt{pseudo-max-3.23}\footnote{\url{http://riot.ieor.berkeley.edu/Applications/Pseudoflow}} \citep{Hochbaum:2008}. We would have liked to compare different parallel max-flow/min-cut solvers \citep{DelongB:2008,ShekhovtsovH:2013,StrandmarkK:2010} regarding their performance and parallel scalability, but a detailed, fair comparison is beyond the scope of this extended abstract.
\label{section:experimental-results}

\subsection{Data Sets}
\begin{table}
\caption{{All graphs are undirected and we report the size of the non-terminal graph.}}
\begin{center}
\vspace{-0.6cm}
\begin{tabular}{rll}
  \hline
  Graph & $|\tVs|$ & $|\tEs|$ \\
  \hline
  \texttt{usroads-48} & 126,146 & 161,950 \\
  \texttt{asia\_osm} & 11,950,757 & 12,711,603 \\
  \texttt{euro\_osm} & 50,912,018 & 54,054,660 \\
  \texttt{adhead.n26c100} & 12,582,912 & 163,577,856 \\
  \texttt{bone.n26c100} & 7,798,784 & 101,384,192 \\
  \texttt{liver.n26c100} & 4,161,600 & 54,100,800 \\
  \hline
 \end{tabular}
\end{center}
\label{table:data-set}
\end{table}
The real world graphs we use to evaluate \algname are from two sources: road networks and N-D grid graphs. The first three road networks in Table \ref{table:data-set} are from the University of Florida Sparse Matrix collection \citep{DavisH:2011}. From these road graphs, we generate their undirected $\st$ \mincut instances using the FlowImprove algorithm \citep{AndersenL:2008} starting from a geometric bisection. The other three N-D grid graphs in Table \ref{table:data-set} are the segmentation problem instances from the University of Western Ontario max-flow datasets\footnote{\url{http://vision.csd.uwo.ca/maxflow-data}} \citep{BoykovF:2006}, on which we convert the edge weights to be floating-point valued by adding uniform random numbers in $[0,1]$.

\subsection{The effect of warm starts}
\begin{figure}
  \centering 
  \includegraphics[width=0.5\linewidth]{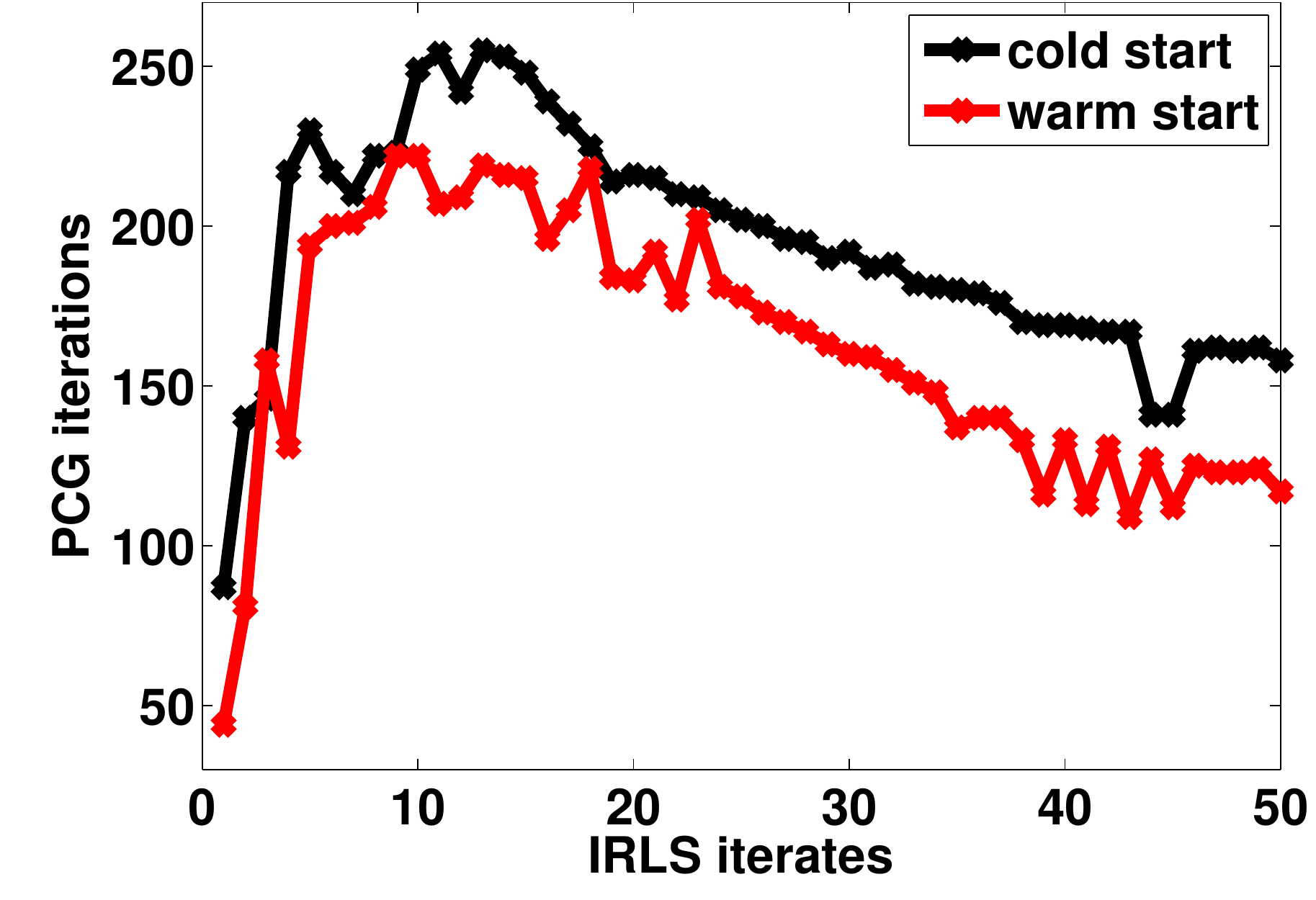}
  \caption{The number of PCG iterations to reach a residual of $10^{-3}$ is reduced by roughly 20\% by using warm starts for this sequence of Laplacian systems.}
  \label{figure:contrast-warm-cold}
\end{figure}
On the graph of \texttt{usroads-48}, we demonstrate the benefit of the warm start heuristic described in Section \ref{subsection:parallel-iterative-solver-the sequence}. We set the smoothing parameter $\epsilon=10^{-6}$ and run the IRLS algorithm for 50 iterations on 4 MPI processes. For each IRLS iterate, we set the maximum number of PCG iterations to be 300, and the stopping criterion in relative residual to be $10^{-3}$. In Figure \ref{figure:contrast-warm-cold} we plot the number of PCG iterations of using warm starts and cold starts. It is apparent that for most IRLS iterates, warm starts help to reduce the number of needed PCG iterations significantly, especially for later IRLS iterates. Another interesting phenomenon we observe from Figure \ref{figure:contrast-warm-cold} is that the difficulty of solving the reduced Laplacian system increases dramatically during the first several IRLS iterates, and then decreases later on. A possible explanation is that the IRLS algorithm makes faster progress during the early iterates.

\subsection{Effect of node voltage polarization}
Continuing with \texttt{usroads-48} as an example, we demonstrate node voltage polarization, which motivates the idea of two-level rounding procedure (see Section \ref{subsection:two-level-rounding-procedure}). We plot a heatmap of $\xt$ (after sorting its components) for $l=0,\ldots,50$ in Figure \ref{figure:node-voltage-polarization}. It is apparent that the polarization gets emphasized as $l$ becomes larger. We also see from Figure \ref{figure:node-voltage-polarization} that the values of $\xt$ change quickly for $l\leq 10$, which reinforces the observation of the IRLS algorithm making faster progress early.

\begin{figure}[ht!]
  \centering 
  \includegraphics[width=0.5\hsize]{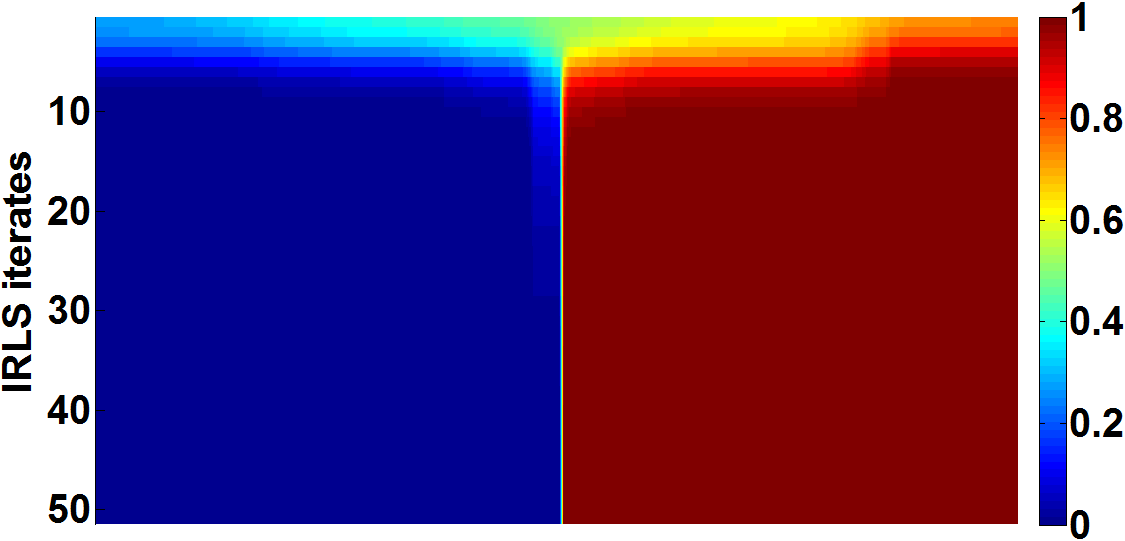}
  \caption{{Node voltage polarization. Each row is a sorted voltage vector $\xt$.}}
  \label{figure:node-voltage-polarization}
\end{figure}

\subsection{Performance evaluation on large graphs}
We evaluate the empirical performance of \algname on the five largest graphs in Table \ref{table:data-set}. Our parallel implementation of \algname is written in C++, and it is purely based on MPI, no multi-threading is exploited. In particular, the parallel PCG solver with block Jacobi preconditioner is implemented using the PETSc\footnote{\url{http://www.mcs.anl.gov/petsc/}} package \citep{Smith:2011}. All the experiments are conducted on a cluster machine with a total number of 192 Intel Xeon E5-4617 processors (8 nodes each with 24 cores). All the results reported in the following are generated using: smoothing parameter $\epsilon=10^{-6}$, number of IRLS iterations $T=50$, warm start heuristic, and 50 PCG iterations at maximum. And on the road networks, we use exact LU factorization by UMFPACK \citep{Davis:2004} to solve the subsystems (\ref{eq:block-jacobi-subsystem}), while on the N-D grid graphs we use ILU(0) in PETSc \citep{Smith:2011}. For implementing the two-level rounding procedure in \algname we use the Boykov-Kolmogorov solver\footnote{\url{http://pub.ist.ac.at/~vnk/software.html}} \citep{BoykovK:2004}, which is a state-of-the-art max-flow solver that can handle floating-point edge weights. We refer to it as the \bksolver in the following.

\begin{figure}[ht!]
  \centering 
  \includegraphics[width=0.5\hsize]{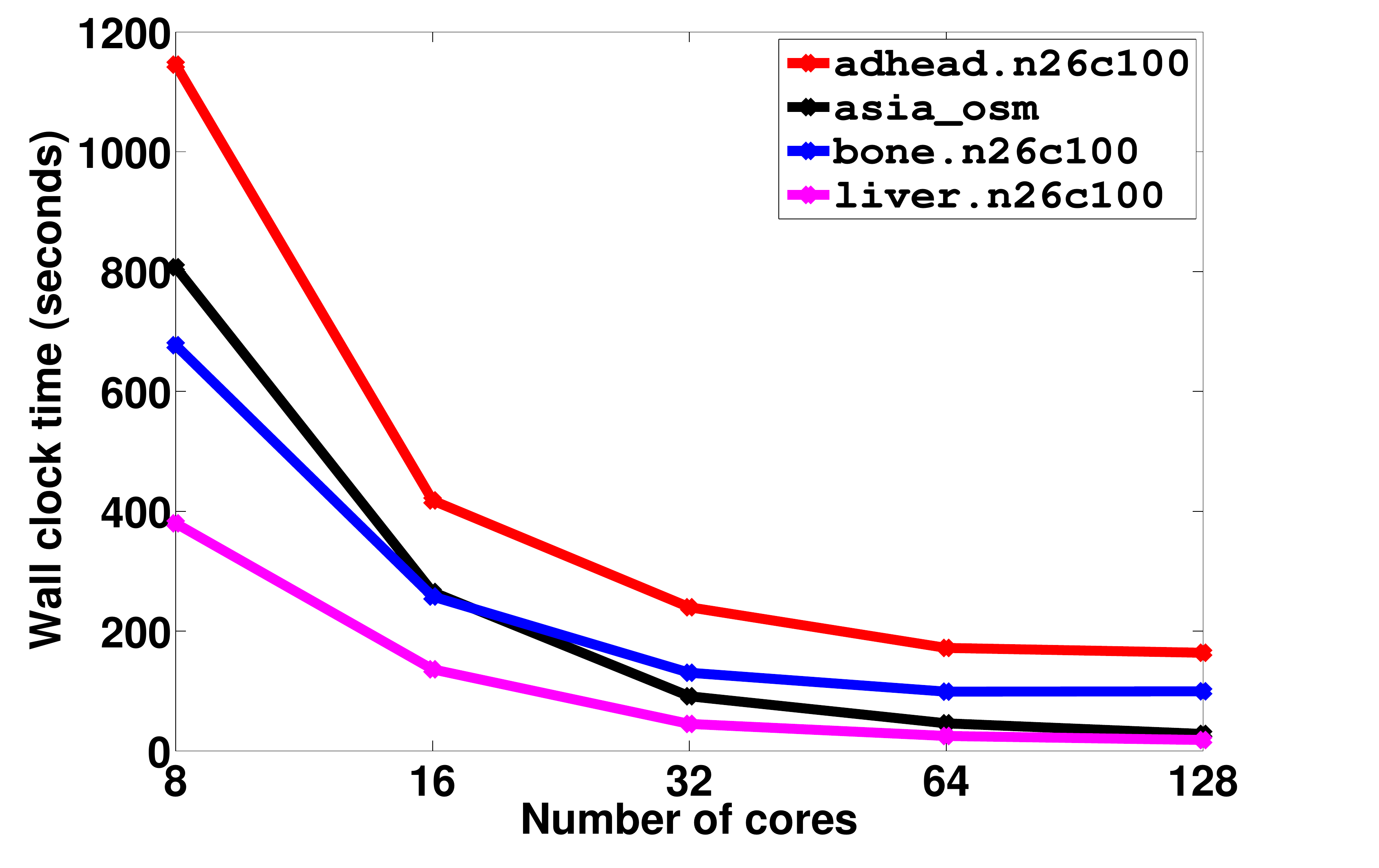}%
  \includegraphics[width=0.47\hsize]{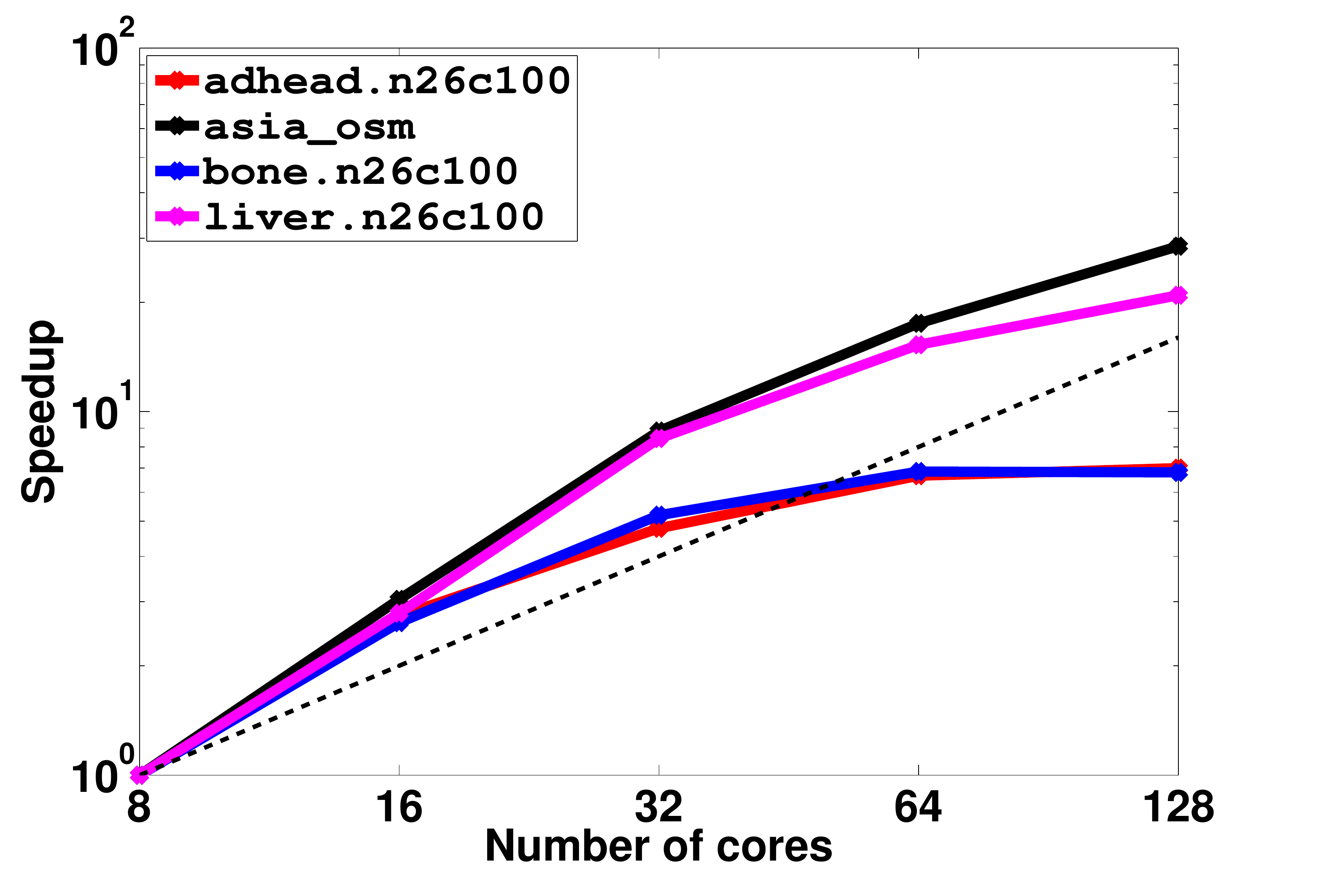}
  \caption{{Parallel scalability of the IRLS iterations of \algname on four large graphs.}}
  \label{figure:parallel-scalability}
\end{figure}

We first study the parallel scalability of the IRLS iterations. The timing results of parallel IRLS iterations on four graphs are plotted in Figure \ref{figure:parallel-scalability}\footnote{The result on euro\_asm is not plotted because its time on 8 cores will dwarf the others.}. In Figure \ref{figure:parallel-scalability}, we see the speedup starts is initially superlinear. This happens because, as $p$ gets large, the total work of applying the block Jacobi preconditioner decreases. On \texttt{asia\_osm}, we can achieve linear speedup until $p=128$. On two of three N-D grid graphs, the linear speedup stops after $p=64$. One reason is that the N-D grid graphs are denser than the road networks (see Table \ref{table:data-set}), which would incur high communication cost when $p$ is large. Another reason is that the use of ILU(0) on N-D grid graphs makes the work reduction not that critical after $p$ is large enough.
\par
In Table \ref{table:irls-time-breakdow} we show the time of different phases of \algname for the given number of cores shown in the last column. We also show in the second to last column the size reduction ratio achieved during the two-level rounding procedure. On the graphs of \texttt{adhead.n26c100} and \texttt{bone.n26c100}, the size reduction is so dramatic that the time taken by the two-level rounding procedure is even less than that of sweep cut on the original graph. However, on the graph of \texttt{liver.n26c100}, the size reduction is not effective and the time of the sequential two-level rounding procedure is even much more than that of IRLS iterations.

\begin{table*}[ht!]
\caption{{Time of different phases of \algname. All the times are in seconds (rounded to integers). The second to last column shows the size reduction ratio of the two-level graph coarsening.}}
\begin{center}
\vspace{-0.6cm}
\begin{tabular}{rllllll}
  \hline
  Graph & Graph Partition & IRLS & Sweep Cut & Two-level & $|\Vs|/|\Vs_c|$  & Number of cores\\
  \hline
  \texttt{asia\_osm} & 5 & 28 & 3 & 16 & 80.5 & 128 \\
  \texttt{euro\_osm} & 7 & 185 & 17 & 109 & 36.2 & 128\\
  \texttt{adhead.n26c100} & 1 & 172 & 6 & 1 & 432.4 & 64\\
  \texttt{bone.n26c100} & 1 & 99 & 4 & 1 & 376.9 & 64\\
  \texttt{liver.n26c100} & 1 & 25 & 2 & 69 & 10.1 & 64\\
  \hline
 \end{tabular}
\end{center}
\vspace{-\baselineskip}
\label{table:irls-time-breakdow}
\end{table*}

\begin{table*}[ht!]
\caption{{Time comparison between \algname and \bksolver. All the times are in seconds (rounded to integers). We use the time of two-level for getting the total time of \algname.}}
\begin{center}
\vspace{-0.6cm}
\begin{tabular}{rllll}
  \hline
  Graph & \algname & \bksolver & Speedup & Number of Cores \\
  \hline
  \texttt{asia\_osm} & 49 & 1465 & 29.8 & 128\\
  \texttt{euro\_osm} & 301 & 3102 & 10.3 & 128 \\
  \texttt{adhead.n26c100} & 174 & 555 & 3.2 & 64 \\
  \texttt{bone.n26c100} & 101 & 215 & 2.1 & 64 \\
  \texttt{liver.n26c100} & 95 & 294 & 3.1 & 64 \\
  \hline
 \end{tabular}
\end{center}
\vspace{-\baselineskip}
\label{table:time-compare-ilrs-bk}
\end{table*}

Finally, we compare the performance of \algname with that of \bksolver. The \bksolver is a sequential code,  and we run it on one Intel Xeon E5-4617 processor. The total time taken by \algname and \bksolver respectively on the five large graphs are shown in Table \ref{table:time-compare-ilrs-bk}. On the road networks, our \algname achieves desirable speedup using 128 cores. Especially on \texttt{asia\_osm}, \algname is almost 30-times faster than \bksolver. In contrast, we find \bksolver to be really efficient on the N-D grid graphs. It is interesting that even though the road networks are planar and sparser, they seem to be much harder than the N-D grid graphs for the \bksolver. 

\begin{table*}
\caption{{Solution quality of sweep cut and two-level. The right two columns are the relative approximation ratio $\delta$.}}
\begin{center}
\vspace{-0.6cm}
\begin{tabular}{rll}
  \hline
  Graph & Sweep Cut & Two-level \\
  \hline
  \texttt{asia\_osm} & $2.1\times 10^{-3}$ & $3.3\times 10^{-5}$ \\
  \texttt{euro\_osm} & $3.2\times 10^{-2}$ & $6.8\times 10^{-5}$ \\
  \texttt{adhead.n26c100} & $1\times 10^{-4}$ & 0 \\
  \texttt{bone.n26c100} & $9.7\times 10^{-4}$ & 0 \\
  \texttt{liver.n26c100} & $4.9\times 10^{-2}$ & $8.8\times 10^{-4}$ \\
  \hline
 \end{tabular}
\end{center}
\label{table:quality-sweepcut-twolevel}
\vspace{-\baselineskip}
\end{table*}

We then use the output of the \bksolver to evaluate the quality of the approximate $\st$ \mincut achieved by sweep cut and the two-level rounding procedure. Specifically, we denote by $\mu^{\ast}$ the $\st$ \mincut value output from \bksolver, and $\mu$ the $\st$ \mincut value output from \algname. Then we compute the relative approximation ratio
\begin{equation}
\delta = (\mu - \mu^{\ast}) / \mu^{\ast}
\end{equation}
As shown in Table \ref{table:quality-sweepcut-twolevel}, the two-level rounding procedure always produces a much better solution than sweep cut, which is a justification of it taking more time on three graphs in Table \ref{table:irls-time-breakdow}.

\section{Discussion}
\label{section:discussions}
The \algname algorithm is a step towards our goal of a scalable, parallel $\st$ \mincut solver for problems with hundreds of millions or billions of edges. It has much to offer, and demonstrates good scalability and improvement over an expertly crafted serial solver.

We have not yet discussed how to take advantage from solving a sequence of related $\st$ \mincut problems because we have not yet completed our investigation into that setting. That said, we have designed our framework such that solving a sequence of problems is reasonably straightforward. Note that all of the set up overhead of \algname can be amortized when solving a sequence of problems if the graph structure and edge weights do not change dramatically. 
The larger issue with a sequence of problems, however, is that our solver only produces $\delta$-accurate solutions. For the applications where a sequence is desired, we need to revisit the underlying methods and see if they can adapt to this type of approximate solutions. This setting also offers an opportunity to study the accuracy required for solving a sequence of problems. It is possible that by solving each individual problem to a lower accuracy, it may generate a sequence with a superior final objective value for application purposes. 
Thus, the goal of our future investigation on a sequence of problems will attempt to solve the sequence faster and better through carefully engineered approximations.

There are also a few further investigations we plan to conduct regarding our framework. First, we used a distributed memory solver for the diagonally dominant Laplacians. Recent work on the nearly linear time solvers for such systems shows the potential for fast runtime, but the parallelization potential is not yet evident in the literature. We wish to explore the possibility of parallelizing these nearly linear time solvers as well as incorporating them into our framework.

Second, we wish to make our two-level rounding procedure more rigorous. As we have mentioned, this method is similar to Lang's flow-based rounding for the spectral method \citep{Lang:2005} that was later formalized in \citep{AndersenL:2008}. 
We hypothesize that a similar formalization will hold here based on which we will be able to create a more principled approach.
Furthermore, we suspect that a multi-level extension of this will also be necessary as we continue to work on larger and larger problems.

Finally, we plan to continue working on improving the analysis of our \algname algorithm in an attempt to formally bound its runtime. This will involve designing more principled stopping criteria based on tighter diagnostics, e.g., from applying our Cheeger-type inequality.

\bibliographystyle{plain}
\bibliography{pirmcut}
\appendix
\section{Appendix}

\subsection{Proof of the Cheeger-type inequality}
\label{sec:cheeger-proof}
We first prove the following characterization of $\lambda_2$.
\begin{proposition}
$\lambda_2$ is the optimal value of the constrained WLS problem
\begin{align}
\label{eq:lambda2-constrained-wls}
\min_{\x} & \ \ \ \ \frac{1}{2C}\x^{\Tr}\L\x\\
\nonumber
\mbox{s.t.} & \ \ \ \ \x_s = 1, \ \ \ \x_t = -1
\end{align}
\end{proposition}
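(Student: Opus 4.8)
The plan is to identify the unique minimizer $\x^\ast$ of the constrained WLS problem \eqref{eq:lambda2-constrained-wls}, show directly that $\x^\ast$ is a generalized eigenvector of the pencil $(\L,\D)$ whose eigenvalue equals the optimal value, and then argue that this eigenvalue is the nonzero one.

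First I would parametrize the feasible set $\{\x : \x_s = 1,\ \x_t = -1\}$, exactly as in the proof of Proposition~\ref{prop:interval-constraint-satisfaction}, by writing $\x = \e_s - \e_t + \Z\v$ with $\v\in\mathbb{R}^{n-2}$ unconstrained, where $\Z$ is the matrix obtained by deleting the $s,t$ columns of $\I_n$. Since $\Gs$ is connected, the reduced Laplacian $\Z^{\Tr}\L\Z$ is positive definite by the same reasoning as in the proof of Proposition~\ref{prop:interval-constraint-satisfaction}, so the objective is a strictly convex quadratic in $\v$ with a unique minimizer; call the corresponding point $\x^\ast$. Its first-order optimality condition is $\Z^{\Tr}\L\x^\ast = 0$, i.e.\ $(\L\x^\ast)_u = 0$ for every non-terminal node $u$.

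Next I would combine this with the zero-row-sum property $\e^{\Tr}\L\x^\ast = (\L\e)^{\Tr}\x^\ast = 0$ to get $(\L\x^\ast)_s + (\L\x^\ast)_t = 0$, hence $\L\x^\ast = \beta(\e_s - \e_t)$ for some scalar $\beta$. On the other hand, since $\bd$ is supported on $\{s,t\}$ with value $C$ and $\x^\ast_s = 1$, $\x^\ast_t = -1$, we have $\D\x^\ast = C(\e_s - \e_t)$. Therefore $\L\x^\ast = (\beta/C)\,\D\x^\ast$, so $\x^\ast$ is a generalized eigenvector of $(\L,\D)$ with eigenvalue $\beta/C$. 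The one-line computation $\x^{\ast\Tr}\L\x^\ast = \beta\,(\x^\ast_s - \x^\ast_t) = 2\beta$ then shows that the optimal value $\tfrac{1}{2C}\x^{\ast\Tr}\L\x^\ast$ equals $\beta/C$, i.e.\ the optimal value is precisely this generalized eigenvalue.

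It remains to check which of the two finite eigenvalues this is, and this is the only point requiring care. Since $\L$ is positive semidefinite, $\beta = \tfrac12\x^{\ast\Tr}\L\x^\ast \ge 0$; and $\beta = 0$ would force $\L\x^\ast = 0$, hence $\x^\ast\in\ker\L = \mbox{span}\{\e\}$, contradicting $\x^\ast_s = 1 \ne -1 = \x^\ast_t$. So $\beta/C > 0$. The pencil has exactly two finite generalized eigenvalues $\lambda_1\le\lambda_2$, and $\lambda_1 = 0$ since $\e$ is a generalized eigenvector with eigenvalue $0$ (because $\L\e = 0$ while $\D\e\ne 0$); hence the strictly positive optimal value must equal $\lambda_2$. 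The two things to be careful about are attainment of the minimum and the non-degeneracy $\beta\ne 0$, and both reduce to connectedness of $\Gs$ in exactly the way already used in Propositions~\ref{prop:nonsingular}--\ref{prop:interval-constraint-satisfaction}. (One could alternatively phrase the argument as a Courant--Fischer characterization $\lambda_2 = \min\{\x^{\Tr}\L\x/\x^{\Tr}\D\x : \x^{\Tr}\D\x>0,\ \x^{\Tr}\D\e = 0\}$, then note $\x^{\Tr}\D\e = 0$ is $\x_t = -\x_s$ and normalize $\x_s = 1$; but the direct construction above avoids invoking min--max theory for a singular pencil.)
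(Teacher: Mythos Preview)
Your proof is correct and takes a genuinely different route from the paper. The paper's argument is precisely the Courant--Fischer alternative you sketch in your parenthetical: it notes $\lambda_1=0$ with eigenvector $\e$, writes $\lambda_2 = \min_{\e^{\Tr}\D\x=0}\, \x^{\Tr}\L\x / \x^{\Tr}\D\x$, observes that $\e^{\Tr}\D\x = 0$ is just $\x_s + \x_t = 0$, and normalizes by scale invariance to $\x_s = 1$, $\x_t = -1$, giving $\x^{\Tr}\D\x = 2C$. This is four lines, but---as you rightly flag---it invokes a Rayleigh-quotient characterization for a singular pencil without further comment. Your route instead solves the constrained quadratic explicitly via the reduced system, reads off the optimality condition $\Z^{\Tr}\L\x^\ast = 0$, combines it with $\e^{\Tr}\L = 0$ to exhibit $\x^\ast$ as a generalized eigenvector, and then identifies the eigenvalue by positivity and the rank-two count of finite eigenvalues. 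It is longer but entirely self-contained, reusing only the connectedness facts already established for Propositions~\ref{prop:nonsingular}--\ref{prop:interval-constraint-satisfaction}; the paper's version is terse but leans on spectral theory for $(\L,\D)$ that it does not spell out.
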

\begin{proof}
By inspection, $\lambda_1=0$ with $\x = \e$. Thus, $\lambda_2$ can be represented by
\begin{align}
\label{eq:generalized-lambda2-representation}
\lambda_2 &= \min_{\e^{\Tr}\D\x=0}\frac{\x^{\Tr}\L\x}{\x^{\Tr}\D\x}
\end{align}
The constraint $\e^{\Tr}\D\x=0$ is equivalent to $\x_s+\x_t=0$. Because the representation (\ref{eq:generalized-lambda2-representation}) is invariant to scaling of $\x$, we can constrain $\x_s=1$ and $\x_t=-1$, which results in $\x^{\Tr}\D\x=2C$. 
\qed
\end{proof}
Note that (\ref{eq:lambda2-constrained-wls}) has a form similar to (\ref{eq:constrained-wls}), where the main difference is that we use $\{1,-1\}$ to encode the boundary condition. We now prove the Cheeger-type inequality.
\begin{theorem}
$\frac{\phi^2}{2}\leq \lambda_2 \leq 2\phi$
\end{theorem}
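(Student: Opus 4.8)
The plan is to prove the two bounds separately, in each case feeding an explicit vector into (or rounding a minimizer of) the variational characterization of $\lambda_2$ established just above, namely $\lambda_2=\min\{\tfrac{1}{2C}\,\x\Tr\L\x:\x_s=1,\ \x_t=-1\}$. Throughout I will use the quadratic-form identity $\x\Tr\L\x=\sum_{\{u,v\}\in\Es}c(\{u,v\})(\x_u-\x_v)^2$, together with the observation that, since $\bd$ is supported on $\{s,t\}$ and $s,t$ lie on opposite sides of any $\st$ cut $(\Ss,\Sscomp)$, we have $\vol(\Ss)=\vol(\Sscomp)=C$; hence $\phi(\Ss)=\cut(\Ss,\Sscomp)/C$ for every $\st$ cut, and $\phi=\mu^\ast/C$ where $\mu^\ast$ is the \mincut value.

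\textbf{Upper bound $\lambda_2\le 2\phi$.} This is immediate: take a \mincut $(\Ss,\Sscomp)$ and test the objective at its signed indicator $\x_u=1$ for $u\in\Ss$, $\x_u=-1$ for $u\in\Sscomp$. This is feasible ($\x_s=1$, $\x_t=-1$); only cut edges contribute to $\x\Tr\L\x$, each contributing $4c(\{u,v\})$, so $\x\Tr\L\x=4\mu^\ast$ and $\lambda_2\le \tfrac{4\mu^\ast}{2C}=2\phi$.

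\textbf{Lower bound $\tfrac{\phi^2}{2}\le\lambda_2$.} This is the Cheeger-rounding direction, where the work lies. Let $\x$ attain the minimum above. First I would truncate every coordinate of $\x$ into $[-1,1]$: this leaves $\x_s,\x_t$ unchanged and does not increase $\x\Tr\L\x$, so I may assume $\x\in[-1,1]^n$ with $\x\Tr\L\x\le 2C\lambda_2$. Then I draw a random threshold $r\in[-1,1]$ with density $|r|$ and form the sweep set $S_r=\{u:\x_u>r\}$; because $\x_s=1$ and $\x_t=-1$ are the extreme coordinates, $S_r$ is an $\st$ cut for every $r\in[-1,1)$, so $\phi(S_r)=\cut(S_r,\bar{S}_r)/C$. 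Since $\tfrac12 r|r|$ is an antiderivative of $|r|$, an edge $\{u,v\}$ oriented so $\x_u\ge\x_v$ is cut with probability $\int_{\x_v}^{\x_u}|r|\,\dif r=\tfrac12(\psi_u-\psi_v)$, where $\psi_u:=\x_u|\x_u|$; hence $\mathbb{E}[\cut(S_r,\bar{S}_r)]=\tfrac12\sum_{\{u,v\}\in\Es}c(\{u,v\})\,|\psi_u-\psi_v|$. Next I would invoke the pointwise inequality $|\psi_u-\psi_v|\le|\x_u-\x_v|\,(|\x_u|+|\x_v|)$, which follows from a short case analysis on the signs of $\x_u,\x_v$, and then Cauchy--Schwarz, to obtain $\mathbb{E}[\cut(S_r,\bar{S}_r)]\le\tfrac12\sqrt{\x\Tr\L\x}\cdot\sqrt{\sum_{\{u,v\}\in\Es}c(\{u,v\})(|\x_u|+|\x_v|)^2}\le\tfrac12\sqrt{2C\lambda_2}\cdot\sqrt{2C}=C\sqrt{\lambda_2}$, where the second factor uses $(|\x_u|+|\x_v|)^2\le 4$ and $\sum_{\{u,v\}\in\Es}c(\{u,v\})=C/2$. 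Averaging then yields a threshold $r$ with $\cut(S_r,\bar{S}_r)\le C\sqrt{\lambda_2}$, so $\phi\le\phi(S_r)\le\sqrt{\lambda_2}$, and therefore $\tfrac{\phi^2}{2}\le\phi^2\le\lambda_2$.

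The routine ingredients are the quadratic-form identity, the volume computation $\vol(\Ss)=\vol(\Sscomp)=C$, and the sign case-analysis behind the pointwise inequality. The main obstacle is orchestrating the rounding so that it closes with the stated constant: one has to choose exactly the right threshold law (density $|r|$ on $[-1,1]$, equivalently a uniform threshold for the transformed coordinates $\psi_u=\x_u|\x_u|$) so that Cauchy--Schwarz splits the expected cut into $\sqrt{\x\Tr\L\x}$ times a factor that truncation bounds by $\sqrt{2C}$; note that the truncation into $[-1,1]^n$ is essential rather than cosmetic, since the bound $(|\x_u|+|\x_v|)^2\le 4$ fails without it.
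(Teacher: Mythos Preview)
Your proof is correct. The upper bound is handled identically to the paper: plug the $\{+1,-1\}$ indicator of a \mincut into the variational formula for $\lambda_2$.

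For the lower bound you take a genuinely different (though closely related) route. The paper follows Chung's deterministic argument verbatim: it takes the minimizer $\g$, argues via the Stieltjes-matrix reasoning of Proposition~\ref{prop:interval-constraint-satisfaction} that $\g\in[-1,1]^n$, restricts to the positive part $\g_+$, and then telescopes $\g_+(v_i)^2-\g_+(v_{i+1})^2$ over the sweep sets $\Ss_i$, bounding the numerator against $\alpha=\min_i\phi(\Ss_i)$ and the denominator via $\sum c(\g_+(u)+\g_+(v))^2\le 2\g_+(s)^2\bd(s)$; this yields $\lambda_2\ge\alpha^2/2\ge\phi^2/2$. You instead truncate (rather than prove) the bound $\x\in[-1,1]^n$, then run the randomized-threshold version of Cheeger with density $|r|$ on $[-1,1]$, which is equivalent to a uniform sweep on the squared-and-signed coordinates $\psi_u=\x_u|\x_u|$. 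Your Cauchy--Schwarz step and the bound $(|\x_u|+|\x_v|)^2\le 4$ play exactly the role of the paper's denominator bound. The two arguments are standard alternative packagings of the same Cheeger machinery; yours is a bit more self-contained (no appeal to the positive-part splitting or to Proposition~\ref{prop:interval-constraint-satisfaction}) and in fact lands on the slightly stronger conclusion $\phi^2\le\lambda_2$, from which you then discard a factor of $2$ to match the stated theorem.
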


\begin{proof}
We first prove the direction of $\lambda_2 \leq 2\phi$. Let $(\Ss,\Sscomp)$ be an $\st$ \mincut such that $s\in\Ss$ and $t\in\Sscomp$. Then $\phi(\Ss)=\phi$. We define the $\{1,-1\}$ encoded cut indicator vector
\begin{align*}
\x_u&=\left\{\begin{array}{ll}
1 & \mbox{ if } u\in\Ss\\
-1 & \mbox{ if } u\in\Sscomp
\end{array}\right.
\end{align*}
This vector satisfies $\e^{\Tr}\D\x=0$. Then by representation (\ref{eq:generalized-lambda2-representation})
\begin{align*}
\lambda_2&\leq \frac{\x^{\Tr}\L\x}{\x^{\Tr}\D\x}=\frac{4 cut(\Ss,\Sscomp)}{2C}\\
&=2\phi(\Ss)=2\phi
\end{align*}
We then prove the direction of $\frac{\phi^2}{2}\leq \lambda_2$. We follow the proof technique and notations used in Theorem 1 of \citep{Chung:2007}. Let $\g$ denote the solution to (\ref{eq:lambda2-constrained-wls}). Using an argument similar to Proposition \ref{prop:interval-constraint-satisfaction}, we can show $-1\leq \g(u)\leq 1$. We sort the nodes according to $\g$ such that
\begin{align}
\label{eq:cheeger-proof-ordering}
1&=\g(s)=\g(v_1)\geq\cdots\geq \g(v_n)=\g(t)=-1
\end{align}
Let $\Ss_i=\{v_1,\ldots,v_i\}$ for $i=1,\ldots,n-1$ and we denote by $\tvol(\Ss_i)=\min\{\vol(\Ss_i),\vol(\Sscomp_i)\}=C$. We let $\alpha=\min_{i=1}^{n-1}\phi(\Ss_i)$. Generalizing the proof to Theorem 1 in \citep{Chung:2007} we can show
\begin{align*}
\lambda_2\geq \frac{\left(\sum_{u\sim v}c(\{u,v\})(\g_{+}(u)^2-\g_{+}(v)^2)\right)^2}{\left(\sum_{u}\g_{+}(u)^2\bd(u)\right)\left(\sum_{u\sim v}c(\{u,v\})(\g_{+}(u)+\g_{+}(v))^2\right)}
\end{align*}
To bound the denominator, we have
\begin{align*}
&\sum_{u\sim v}c(\{u,v\})(\g_{+}(u)+\g_{+}(v))^2\\
\leq& 2\sum_{u\sim v}c(\{u,v\})(\g_{+}(u)^2+\g_{+}(v)^2)\leq 2\g_{+}(s)^2\bd(s)
\end{align*}
where the last inequality follows from $\g_{+}(u)\leq \g_{+}(s)$ and $\bd(s)=C=2\sum_{u\sim v}c(\{u,v\})$. Because $\g_{+}(t)=0$ and according to the definition of (\ref{eq:cheeger-node-weight}) we have $\sum_{u}\g_{+}(u)^2\bd(u)=\g_{+}(s)^2\bd(s)$. By telescoping the term $\g_{+}(u)^2-\g_{+}(v)^2$ using the ordering (\ref{eq:cheeger-proof-ordering}), we then have
\begin{align*}
\lambda_2&\geq \frac{\left(\sum_{i=1}^{n-1}(\g_{+}(v_i)^2-\g_{+}(v_{i+1})^2)cut(\Ss_i,\Sscomp_i)\right)^2}{2(\g_{+}(s)^2\bd(s))^2}\\
&\geq \frac{\alpha^2}{2} \frac{\left(\sum_{i=1}^{n-1}(\g_{+}(v_i)^2-\g_{+}(v_{i+1})^2)\tvol(\Ss_i)\right)^2}{(\g_{+}(s)^2\bd(s))^2}= \frac{\alpha^2}{2}
\end{align*}
where the last equality follows from $\sum_{i=1}^{n-1}(\g_{+}(v_i)^2-\g_{+}(v_{i+1})^2)\tvol(\Ss_i)=\g_{+}(s)^2\bd(s)$ because $\tvol(\Ss_i)=C=\bd(s)$ for $i=1,\ldots,n-1$. From $\alpha\geq \phi$ we have $\lambda_2\geq \frac{\phi^2}{2}$. \qed
\end{proof}

\end{document}